\definecolor{otherlightblue}{RGB}{0, 100, 200}
\definecolor{othergreen}{RGB}{30, 125, 0}
\definecolor{otherblue}{RGB}{0, 60, 150}
\newcommand{\lipschitzcost}{L_V}
\newcommand{\phitilde}{\widetilde{\phi}}
\newtheorem{theorem}{Theorem}[section] 
\newtheorem{definition}[theorem]{Definition}
\newtheorem{lemma}[theorem]{Lemma}
\newtheorem{fact}[theorem]{Fact} 
\newtheorem{corollary}[theorem]{Corollary}
\newtheorem{assumption}[theorem]{Assumption}
\newcommand{\nconstr}{m}
\newcommand{\T}{T}
\newcommand{\R}{\mathbb{R}}
\newcommand{\Diag}{\mathrm{Diag}}
\newcommand{\fcl}[1]{f_{\mathrm{cl}}^{#1}}
\newcommand{\E}{\mathbb{E}}
\newcommand{\pirs}{\pi^{rs}}
\newcommand{\piexpert}{\pi^{\star}} 
\newcommand{\pilearned}{\hat{\pi}}
\DeclareMathOperator*{\argmin}{arg\,min}
\DeclareMathOperator*{\adj}{adj}
\newcommand{\X}{{X}}
\newcommand{\U}{{U}}
\newcommand{\pimpc}{\boldsymbol{\pi}_{\mathrm{mpc}}}
\newcommand{\pibmpc}{\boldsymbol{\pi}_{\mathrm{mpc}}^\eta}
\newcommand{\vueta}{\vec{u}_{\eta}}
\newcommand\numberthis{\addtocounter{equation}{1}\tag{\theequation}} 
\numberwithin{equation}{section}
\newtheorem{problem}[theorem]{Problem}
\crefname{prob}{Problem}{Problems}
\crefname{ineq}{Inequality}{Inequalities}
\newcommand{\mat}[1]{{#1}}
\renewcommand{\vec}[1]{{#1}}
\newcommand{\mI}{\mat{I}}
\newcommand{\mLambda}{\mat{\Lambda}}
\newcommand{\mA}{\mat{A}}
\newcommand{\mD}{\mat{D}}
\newcommand{\mH}{\mat{H}}
\newcommand{\mF}{\mat{F}}
\newcommand{\mG}{\mat{G}}
\newcommand{\mP}{\mat{P}}
\newcommand{\vw}{\vec{w}}
\newcommand{\mM}{\mat{M}}
\newcommand{\DuetaDx}{\frac{\partial \vueta}{\partial \vec{x}_0}}
\newcommand{\va}{\vec{a}}
\newcommand{\vd}{\vec{d}}
\newcommand{\vb}{\vec{b}}
\newcommand{\vx}{\vec{x}}
\newcommand{\vu}{\vec{u}}
\newcommand{\vv}{\vec{v}}
\newcommand{\ve}{\vec{e}}
\newcommand{\vg}{\vec{g}}
\newcommand{\mK}{\mat{K}}
\newcommand{\mC}{\mat{C}}
\newcommand{\mU}{\mat{U}}
\newcommand{\mV}{\mat{V}}
\renewcommand{\det}{\mathrm{det}}
\newcommand{\philb}{\textrm{res}_{\textrm{l.b.}}}
\global\long\def\vc{\vec{c}}%
\global\long\def\vx{\vec{x}}%
\global\long\def\gap{\mathrm{gap}}%
\global\long\def\u{\vec{u}}%
\global\long\def\ue{\u_{\eta}}%
\global\long\def\v{\vec{v}}%
\global\long\def\vstar{\v^{\star}}%
\global\long\def\veta{\v_{\eta}}%
\global\long\def\us{\u^{\star}}%
\global\long\def\re{\textrm{res}}%
\title{
On the Sample Complexity of Imitation Learning\\ for Smoothed Model Predictive Control\thanks{The first two authors contributed equally. A  preliminary version of this manuscript is published in CDC 2024.}
}
\author{Daniel Pfrommer\thanks{Massachusetts Institute of Technology. Email: \texttt{dpfrom@mit.edu}. }
 \and Swati Padmanabhan\thanks{Massachusetts Institute of Technology. Email: \texttt{pswt@mit.edu}.}
 \and Kwangjun Ahn\thanks{Microsoft Research. Email: \texttt{kwangjunahn@microsoft.com}.}
 \and Jack Umenberger\thanks{University of Oxford. Email: \texttt{jack.umenberger@eng.ox.ac.uk}.}
 \and Tobia Marcucci\thanks{Massachusetts Institute of Technology. Email: \texttt{tobiam@mit.edu}.}
 \and Zakaria Mhammedi\thanks{Massachusetts Institute of Technology. Email: \texttt{mhammedi@mit.edu}.}
 \and Ali Jadbabaie\thanks{Massachusetts Insitute of Technology. Email: \texttt{jadbabai@mit.edu}.}
 }
\begin{document}

\maketitle

\begin{abstract}

Recent work in imitation learning has shown that having an expert controller that is both suitably smooth and stable enables stronger guarantees on the performance of the  learned controller. However, constructing such smoothed expert controllers for arbitrary systems remains challenging, especially in the presence of input and state constraints. As our primary contribution, we show how such a smoothed expert can be designed for a general class of systems using a log-barrier-based relaxation of a standard Model Predictive Control (MPC) optimization problem. At the crux of this theoretical guarantee on  smoothness is a new lower bound we prove on the optimality gap of the analytic center associated with a convex Lipschitz function, which we hope could be of independent interest. We validate our theoretical findings via experiments, demonstrating the merits of our smoothing approach over randomized smoothing.

\end{abstract}

\newpage

\section{Introduction}\label{sec:introduction}
Imitation learning has emerged as a powerful tool in machine learning, enabling agents to learn complex behaviors  by imitating expert demonstrations  acquired either from a human demonstrator or a policy computed offline~\cite{pomerleau1988alvinn, ratliff2009learning, abbeel2010autonomous, ross2011reduction}. Despite its significant success, imitation learning often suffers from a {compounding error problem}: Successive evaluations of the approximate policy can accumulate error, resulting in out-of-distribution failures
\cite{pomerleau1988alvinn}. Recent results~\cite{pfrommer2022tasil, tu2022sample, block2023provable} have identified \emph{smoothness} (i.e. the derivative, with respect to the state, of the control policy being Lipschitz) and \emph{stability} of the expert as two key properties that enable circumventing this issue, thereby allowing for end-to-end performance guarantees for the final learned controller.

In this paper, our focus is on enabling such guarantees when the expert being imitated is a Model Predictive Controller (MPC), a powerful class of control algorithms based on solving an optimization problem over a receding prediction horizon~\cite{allgower2012nonlinear}.
In some cases, the solution to this multiparametric optimization problem, known as the explicit MPC representation \cite{bemporad2002explicit},
can be pre-computed. For our setup --- linear systems with polytopic constraints --- the optimal control input is known to be a piecewise affine function of the state.
However, the number of these pieces may grow exponentially with the time horizon and the state and input dimension, which could render pre-computing and storing such a representation  impractical in high dimensions.

While the approximation of a linear MPC controller has garnered significant attention~\cite{maddalena2020neural, chen2018approximating, ahn2023model},  prior works typically approximate the (non-smooth) explicit MPC with a neural network and introduce schemes to enforce the stability of the learned policy. In contrast,  we  construct a smoothed version of the expert and  apply stronger theoretical results for the imitation of a smoothed expert.

Specifically, we demonstrate --- both theoretically and empirically ---  that a log-barrier  formulation of the underlying MPC optimization yields the same desired smoothness properties as its randomized-smoothing-based counterpart, while being  faster to compute. Our barrier MPC formulation replaces the constraints in the MPC optimization problem with ``soft constraints'' using the log-barrier (cf. \Cref{sec:barrier_mpc_all}). 
We show that, when used in conjunction with a black-box imitation learning algorithm, this enables end-to-end guarantees on the performance of the learned policy.

\section{Problem Setup and Background}
\label{sec:notation}
\looseness=-1We first state our notation and setup. The notation $\|{}\cdot{}\|$ refers to the $\ell_2$ norm $\|{}\cdot{}\|_2$. Unless transposed, all vectors are column vectors. 
For a vector $\vx$, we use $\Diag(\vx)$ for the diagonal matrix with the entries of $\vx$ along its diagonal. We use $[n]$ for the set $\{1, 2, \dots, n\}$. Given $M \in \R^{n \times n}$ and $\sigma \in \{0,1\}^n$, we denote by $[M]_{\sigma}$ the principal submatrix, of $M$, with rows and columns $i$ for which $\sigma_{i}=1$. We additionally use $M_{\sigma}^{-1}$ and $\adj(M)_\sigma$ to denote, respectively, the inverse and adjugate (the transpose of the cofactor matrix) of $[M]_\sigma$, appropriately padded with zeros back to the size of $M$, at same location.

\looseness=-1We consider constrained discrete-time linear dynamical systems of the form
\begin{align}
x_{t+1} = Ax_t + Bu_t, \quad x_t \in \X, u_t \in \U, \label{eq:dynamics}
\end{align}
with state $x_t \in \R^{d_x}$ and control-input $u_t \in \R^{d_u}$ indexed by time step $t$, and state and input maps $A \in \R^{d_x \times d_x}$ and $B \in \R^{d_x \times d_u}$. The sets  $\X$ and $\U$, respectively, 
are the compact convex state and input constraint sets given by the polytopes $$\X := \{x \in \R^{d_x}\,\mid\, A_x x \leq b_x\}, \quad \U := \{u\in \R^{d_u} \,\mid\, A_u u \leq b_u \},$$ where $A_x \in \R^{k_x \times d_x}$, $A_u \in \R^{k_u \times d_x}$, $b_x \in \R^{k_x}$, and  $b_u \in \R^{k_u}$. A constraint $f(x)\leq 0$ is said to be ``active'' at $y$ if $f(y)=0$. 
For notational convenience, we overload $\phi$ to compactly denote the vector of constraint residuals for a state $x$ and input $u$ as well as for the sequences $x_{1:T}$ and $u_{0:\T-1}$: 
\[
\phi(x,u) := \begin{bmatrix}b_x - A_x x \\ b_u - A_u u\end{bmatrix}, 
\phi(x_0,u_{0:\T-1}) := \begin{bmatrix}\phi(x_1,u_0) \\ \vdots \\ \phi(x_T, u_{\T-1})\end{bmatrix}.\numberthis\label{eq:phi}
\] 

We consider deterministic state-feedback control policies of the form $\pi: \X \to \U$ and denote the closed-loop system under $\pi$ by
$\fcl{\pi}(x) := Ax + B\pi(x)$. 
We  use $\piexpert$ to refer to the expert policy and $\pilearned$ for its learned approximation. 
\looseness=-1In particular, our
 choice of $\piexpert$ 
in this paper is an MPC with quadratic cost and linear constraints.  The MPC policy is obtained by solving the following minimization problem over future actions $\vu:= \vu_{0:\T-1}$
with quadratic cost in $\vu$ and states $\vx := \vx_{1:\T}$: 
\[ 
\begin{array}{ll}
\vspace{0.5em}
\mbox{minimize}_{\vu} & V(\vx_0, \vu):= \sum_{t=1}^\T x_t^\top Q_t x_t + \sum_{t=0}^{\T-1} u_t^\top R_t u_t \\
\mbox{where} &x_{t + 1} := Ax_t + Bu_t,  \\
&x_T \in \X, u_0 \in \U, \\
&x_t \in \X, u_t \in \U, \; \forall t \in [T-1],
\end{array}\label[prob]{eq:V}\numberthis
\] 
where $Q_t$ and $R_{t-1}$ are positive definite for all $t \in [\T]$.
For a given state $x$, the corresponding input $\pimpc$  of the MPC  is: 
\begin{align}\label[prob]{eq:pi_mpc}
\pimpc(x) := \argmin_{u_0} \min_{u_{1:\T-1}} V(x,u_{0:\T-1}),
\end{align}
where the minimization is  over  the feasible set defined in \Cref{eq:V}.
For $\pimpc$ to be well-defined, we assume that $V(x_0,u)$ has a unique global minimum in $u$ for all feasible $x_0$.

\subsection{Explicit Solution to MPC}\label{sec:explicitMPC}
\begin{figure}
    \centering
    \includegraphics[width=0.5\linewidth]{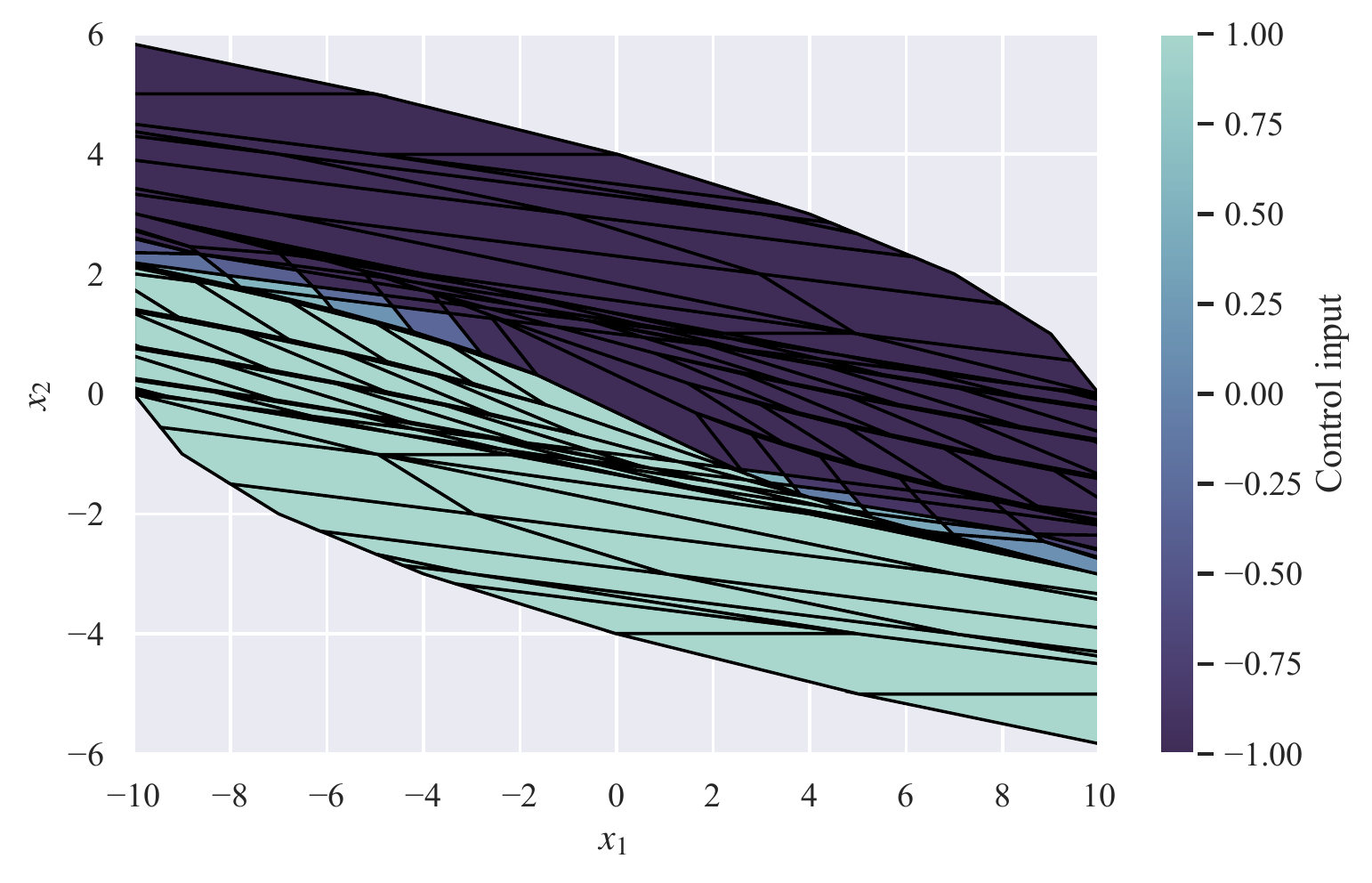}
    \caption{The explicit MPC controller for $A = \begin{bmatrix}1 & 1 \\ 0 & 1\end{bmatrix}, B = \begin{bmatrix}0 \\ 1\end{bmatrix}, Q = I, R = 0.01, \T=10$ with the constraints $\|x\|_\infty \leq 10, |u|\leq 1$.}
    \label{fig:explicit_mpc}
\end{figure}
Explicit MPC \cite{bemporad2002explicit} rewrites  \Cref{eq:pi_mpc} as a multi-parametric quadratic program with linear inequality constraints and  solves it for every possible combination of active constraints, building an analytical solution to the control problem.
We therefore rewrite \Cref{eq:pi_mpc} as the optimization problem, in variable $\vu := \vu_{0:\T-1} \in \R^{\T  d_u}$, as described below:  
\[ 
\begin{array}{ll}
\vspace{0.5em}
\mbox{minimize}_{\vu} &V(x_0, u):= \tfrac{1}{2}\vu^\top \mH \vu  - \vx_0^\top \mF \vu\\
\mbox{subject to } &\mG \vu \leq \vw + \mP \vx_0,  
\end{array}\numberthis\label[prob]{eq:reformulated}
\] 
with  matrices $\mH\in\R^{\T \cdot d_u \times T \cdot d_u}$, $\mF \in \R^{d_x \times \T \cdot d_u}$,  $\mG \in \R^{\nconstr \times \T \cdot d_u}$, and $\mP \in \R^{m \times d_x}$, and vector $\vw\in  \R^{\nconstr}$, all given by 
\begin{align*}
    H &= R_{0:\T-1} + \widehat{B}^\top Q_{1:\T} \widehat{B}, \,\, F = -2\widehat{A}^\top Q_{1:\T}\widehat{B}, \\
    G &= \begin{bmatrix}
        A_u \\
        A_x\widehat{B}
    \end{bmatrix}, P = \begin{bmatrix}
        0 \\
        -A_x\widehat{A}
    \end{bmatrix}, w = \begin{bmatrix}
        b_u \\
        b_x
    \end{bmatrix},
\end{align*}
where $Q_{1:\T}, R_{0:\T-1}$ are block diagonal  with $Q_1, \dots, Q_\T$ and $R_0, \dots, R_{\T-1}$ on the diagonal, and $\widehat{B}$ and $\widehat{A}$ are
\begin{align*}
    \widehat{A} = \begin{bmatrix}A 
    \\ \vdots \\ A^{\T}\end{bmatrix}, \quad
    \widehat{B} = \begin{bmatrix}
        B & 0 & \dots & 0 & \\
        \vdots & \vdots & \ddots & \vdots \\
        A^{\T -1}B & A^{\T-2}B & \dots & B
    \end{bmatrix}
\end{align*}
so that $x_{1:\T} = \widehat{A}x_0 + \widehat{B}u$. 
We assume that the constraint polytope in \Cref{eq:reformulated} contains   a full-dimensional ball of radius $r$ and is contained inside an origin-centered ball of radius $R$. Consequently, its objective is $\lipschitzcost$-Lipschitz for some constant $\lipschitzcost$. 
We now state the solution of \Cref{eq:reformulated}~\cite{alessio2009survey} and later (in \Cref{thm:convex_combination}) show how it appears in the smoothness of the \textit{barrier} MPC solution.
\begin{fact}[\cite{bemporad2002explicit}]\label{thm:dudx}
Let $\sigma \in \{0,1\}^\nconstr$ denote a set of active constraints for \Cref{eq:reformulated}, with $\sigma_i= 1$ iff the $i$th constraint is active. We overload this notation so that $\sigma(x_0)$ represents active constraints of the solution of \Cref{eq:reformulated} for a particular $x_0$. Let $P_\sigma = \{ x | \sigma(x) = \sigma\}$ be the
the set of $x_0$ for which the solution has active constraints $\sigma$. Then for $x_0 \in P_\sigma$, the solution $u$ of \Cref{eq:reformulated} may be expressed as
$ u = K_{\sigma} x_0 + k_{\sigma}$, 
where $K_\sigma$ and $k_\sigma$ are defined as: 
\begin{equation}\numberthis\label{eq:K_sigma}
\begin{aligned}
    K_{\sigma} &:= \mH^{-1}[\mF^\top - \mG^\top(\mG \mH^{-1} \mG^\top)_\sigma^{-1}(\mG \mH^{-1} \mF^\top - \mP)],\\
    k_\sigma &:= H^{-1} G^\top (G H^{-1} G^\top)_\sigma^{-1} w.
\end{aligned}
\end{equation}
\end{fact}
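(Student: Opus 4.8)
The plan is to characterize the minimizer of \Cref{eq:reformulated} by its KKT conditions and then solve the resulting linear system in closed form. Since $\mH \succ 0$, \Cref{eq:reformulated} is a strictly convex quadratic program with linear inequality constraints, so for every feasible $\vxn$ it has a unique minimizer $\vu$, and $\vu$ is optimal if and only if there exists a multiplier vector $\lambda$ such that the KKT conditions hold: stationarity $\mH\vu - \mF^\top\vxn + \mG^\top\lambda = 0$, primal feasibility $\mG\vu \leq \vw + \mP\vxn$, dual feasibility $\lambda \geq 0$, and complementary slackness $\lambda_i\,(\mG\vu - \vw - \mP\vxn)_i = 0$ for all $i \in [\nconstr]$.

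Next I would fix $\sigma$ and $\vxn \in P_\sigma$. By definition of $P_\sigma$, the constraints indexed by $\sigma$ are active at the optimum and all others are inactive, so complementary slackness forces $\lambda_i = 0$ whenever $\sigma_i = 0$. Writing $\mG_\sigma, \vw_\sigma, \mP_\sigma$ for the rows of $\mG, \vw, \mP$ selected by $\sigma$ and $\lambda_\sigma$ for the corresponding block of multipliers, the KKT system collapses to the two linear equations
\[
\mH\vu + \mG_\sigma^\top\lambda_\sigma = \mF^\top\vxn, \qquad \mG_\sigma\vu = \vw_\sigma + \mP_\sigma\vxn .
\]
Using the first equation to eliminate $\vu = \mH^{-1}(\mF^\top\vxn - \mG_\sigma^\top\lambda_\sigma)$ and substituting into the second gives $\mG_\sigma\mH^{-1}\mG_\sigma^\top\lambda_\sigma = \mG_\sigma\mH^{-1}\mF^\top\vxn - \vw_\sigma - \mP_\sigma\vxn$. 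Provided the active-constraint rows are linearly independent, $\mG_\sigma$ has full row rank, hence $\mG_\sigma\mH^{-1}\mG_\sigma^\top \succ 0$ is invertible — and this matrix is precisely the principal submatrix $(\mG\mH^{-1}\mG^\top)_\sigma$ in the paper's notation. Solving for $\lambda_\sigma$, substituting back into $\vu = \mH^{-1}(\mF^\top\vxn - \mG_\sigma^\top\lambda_\sigma)$, and grouping the terms multiplying $\vxn$ separately from the constant terms, one reads off an affine expression $\vu = K_\sigma\vxn + k_\sigma$. Finally, invoking the zero-padding convention so that $\mG^\top(\mG\mH^{-1}\mG^\top)_\sigma^{-1}(\cdot)_\sigma = \mG_\sigma^\top(\mG_\sigma\mH^{-1}\mG_\sigma^\top)^{-1}(\cdot)_\sigma$ (applied to $\mG\mH^{-1}\mF^\top - \mP$ for the linear part and to $\vw$ for the constant part) matches the resulting coefficients exactly to the formulas for $K_\sigma$ and $k_\sigma$ in \Cref{eq:K_sigma}.

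The one genuine subtlety is the invertibility of $(\mG\mH^{-1}\mG^\top)_\sigma$: it needs the active constraints at the optimum to be linearly independent, a LICQ-type nondegeneracy condition. This holds for generic problem data, and on the full-dimensional regions $P_\sigma$ of interest it can be folded into the standing assumptions; everything else is a routine elimination, with strict convexity ensuring the pair $(\vu,\lambda_\sigma)$ produced is the unique optimizer, so the affine formula is valid throughout $P_\sigma$. I expect this rank/nondegeneracy point to be the only place requiring care — the rest is bookkeeping with the submatrix-and-padding notation.
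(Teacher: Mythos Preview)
Your derivation is correct and is exactly the standard KKT-based argument used to obtain the explicit-MPC affine law. The paper, however, does not supply its own proof of this statement: it is stated as a \emph{Fact} and attributed to \cite{bemporad2002explicit}, so there is no in-paper proof to compare against. Your write-up faithfully reproduces the classical derivation from that reference, including the use of the zero-padding convention to rewrite $G_\sigma^\top(G_\sigma H^{-1}G_\sigma^\top)^{-1}(\cdot)_\sigma$ as $G^\top(GH^{-1}G^\top)_\sigma^{-1}(\cdot)$, and your identification of the LICQ nondegeneracy assumption as the only nontrivial point is accurate.
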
 
Based on this fact, one may pre-compute an efficient lookup structure mapping $x \in P_\sigma$ to $K_\sigma, k_\sigma$. However, since every combination of active constraints may potentially yield a unique feedback law, the number of pieces to be computed may grow \textit{exponentially} in the problem dimension or time horizon. For instance, even the simple two-dimensional toy system in \Cref{fig:explicit_mpc} has  $261$ pieces. In high dimensions or over long time horizons, merely enumerating all pieces of the explicit MPC may be computationally intractable. 

This observation motivates us to consider approximating explicit MPC using a polynomial number of sample trajectories, collected offline. We introduce this framework next.

\section{Motivating Smoothness: Imitation Learning Frameworks}\label{sec:learning_guarantees}
In this section, we motivate barrier MPC by specializing  to the setting of \Cref{eq:V} the framework from \cite{pfrommer2022tasil}, which enables high-probability guarantees on the quality of an approximation.  

Suppose we are given an expert controller $\piexpert$, a policy class $\Pi$, a distribution of initial conditions $\mathcal{D}$, and $N$ sample trajectories $\{x_{0:K-1}^{(i)}\}_{i=1}^N$ of length $K$, with $\{x_0^{(i)}\}_{i=1}^N$ sampled i.i.d from $\mathcal{D}$. Our goal is to find an approximate policy $\widehat{\pi} \in \Pi$ such that, given an accuracy parameter $\epsilon$, the closed-loop states $\widehat{x}_t$  and $x_t^\star$ induced by $\widehat{\pi}$ and $\piexpert$, respectively, satisfy, with high probability over $x_0 \sim \mathcal{D}$,
$$\|\widehat{x}_t - x^\star_t\| \leq \epsilon, \forall t > 0.$$ This is formalized in \Cref{prop:goodness_of_learned_policy}. To understand this statement, we first establish some assumptions. 

We first assume through  \Cref{assumption:closeness_approx_expert} that $\pilearned$ has been chosen by a black-box supervised imitation learning algorithm which, given the input data, produces a $\pilearned \in \Pi$ such that, with high probability over the  distribution induced by $\mathcal{D}$, the policy {and its Jacobian} are close to the expert.

\begin{assumption}\label{assumption:closeness_approx_expert}
For some $\delta \in (0,1), \epsilon_0 > 0, \epsilon_1 > 0$ and given $N$ trajectories $\{x_{0:K-1}^{(i)}\}_{i=1}^{(N)}$ of length $K$ sampled i.i.d. from $\mathcal{D}$ and rolled out under $\piexpert$, the approximating policy $\pilearned$ satisfies:
\begin{align*}
    \mathbb{P}_{x_0 \sim \mathcal{D}}\bigg[&\sup_{k \geq 0}\|\pilearned(x_k) - \piexpert(x_k)\| \leq \epsilon_0/N \,\,\, \wedge \,\,\,  \sup_{k \geq 0}\left\|\frac{\partial\widehat{\pi}}{\partial x}(x_k) - \frac{\partial \piexpert}{\partial x}(x_k)\right\| \leq \epsilon_1/N \bigg] \geq 1 - \delta. 
\end{align*}\label{assum:bounds}
\end{assumption}
For instance, as shown in \cite{pfrommer2022tasil}, \Cref{assumption:closeness_approx_expert} holds for $\pilearned$ chosen as an empirical risk minimizer from a class of twice differentiable parametric functions with $\ell_2$-bounded parameters, e.g., dense neural networks with smooth activation functions and trained with $\ell_2$ weight regularization.  We refer the reader to \cite{pfrommer2022tasil, tu2022sample} for other such examples of $\Pi$. 
Note the above definition   requires  generalization on only the state distribution induced by the expert, rather than the distribution induced by the learned policy, as in \cite{ahn2023model, chen2018approximating}.

Next, we define a weaker variant of the standard \emph{incremental input-to-state stability} ($\delta$ISS) \cite{vosswinkel2020determining} and assume, in \Cref{assum:stable}, that this property holds for the expert policy. 

\begin{definition}[Local Incremental Input-to-State Stability, cf. \cite{pfrommer2022tasil}]\label{def:locIncStab} For all initial conditions $x_0 \in X$ and bounded sequences of input perturbations $\{\Delta_t\}_{t > 0}$ that satisfy $\|\Delta_t\| < \eta$, let $\overline{x}_{t+1} = \fcl{\pi}(\overline{x}_t, 0)$, $\overline{x}_0 = x_0$ be the nominal trajectory, and let $x_{t+1} = \fcl{\pi}(x_t, \Delta_t)$ be the perturbed trajectory. We say that the closed-loop dynamics under $\pi$ is $(\eta, \gamma)$-locally-incrementally stable for $\eta, \gamma > 0$ if 
\begin{align*}
    \|x_t - \overline{x}_t\| \leq \gamma \cdot \max_{k < t} \|\Delta_k\|, \quad \forall t \geq 0.
\end{align*}
\end{definition}

\begin{assumption}\label{assum:stable}
    The expert policy $\piexpert$ is ($\eta,\gamma$)-locally incrementally stable.
\end{assumption}
As noted in \cite{pfrommer2022tasil}, local $\delta$ISS is a much weaker criterion than even just regular incremental input-to-state stability. There is considerable prior work demonstrating that ISS (and $\delta$ISS) holds under mild conditions for both the explicit MPC and the barrier-based MPC under consideration in this paper \cite{pouilly2020stability}. We refer the reader to \cite{zamani2011lyapunov} for more details.
Having established some preliminaries for stability, we now move on to the smoothness property we consider. 
\begin{definition}[Smoothness]\label{def:def_smoothness} We say that an MPC policy $\pi$ is $(L_0, L_1)$-smooth if for all $x_0 \in \X$ and $x_1 \in \X$, 
\begin{align*}
    \|\pi(x_0) - \pi(x_1)\| &\leq L_0\|x_0 - x_1\|, \\
    \left\|\partial_x \pi(x_0) - \partial_x \pi(x_1)\right\| &\leq L_1\|x_0 - x_1\|.
\end{align*}
\end{definition}
\begin{assumption}\label{assumption:smoothness_of_exp_and_learned} The expert policy $\piexpert$ and the learned policy $\pilearned$ are both $(L_0, L_1)$-smooth.
\end{assumption}
At a high level, by assuming smoothness of the expert and the learned policy, we can implicitly ensure that the learned policy captures the stability of the expert in a neighborhood around the data distribution. If the expert or learned policy were to be only piecewise smooth, a transition from one piece to another in the expert, which is not replicated by the learned policy, could  lead to  unstable closed-loop behavior.

Having stated all the necessary assumptions, we are now ready to state below  the main export of this section, guaranteeing closeness of the learned and expert policies. 

\begin{fact}[cf. \cite{pfrommer2022tasil}, Corollary A.1]\label{prop:goodness_of_learned_policy} Provided $\piexpert, \pilearned$ are $(L_0,L_1)$-smooth, $\piexpert$ is $\gamma$-locally-incrementally stable, and $\widehat{\pi}$ satisfies \Cref{assum:bounds} with $\frac{\epsilon_0}{N} \leq \frac{1}{16\gamma^2 L_1}$ and $\frac{\epsilon_1}{N} \leq \frac{1}{4\gamma}$, $\delta > 0$, then with probability $1-\delta$ for $x_0 \sim \mathcal{D}$, we have 
\begin{align*}
    \|\widehat{x}_t - x^\star_t\| \leq \frac{8 \gamma \epsilon_0}{N} \quad \forall t \geq 0.
\end{align*}
\end{fact}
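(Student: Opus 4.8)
The plan is to view the learned closed-loop trajectory $\widehat{x}_t$ as a \emph{perturbed} copy of the expert trajectory $x^\star_t$, where the perturbation injected at step $t$ is the input error $\Delta_t := \pilearned(\widehat{x}_t) - \piexpert(\widehat{x}_t)$ of the learned policy, and then to run an induction that plays the local $\delta$ISS bound of \Cref{assum:stable} against a second-order Taylor estimate of $\norm{\Delta_t}$. First I would condition on the probability-$(1-\delta)$ event of \Cref{assum:bounds}: on this event $\norm{\pilearned(x_k^\star) - \piexpert(x_k^\star)} \le \epsilon_0/N$ and $\norm{\partial_x\pilearned(x_k^\star) - \partial_x\piexpert(x_k^\star)} \le \epsilon_1/N$ hold for every $k \ge 0$ along the \emph{expert} rollout, and everything below takes place on this event. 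Write $e_t := \widehat{x}_t - x^\star_t$ and $g := \pilearned - \piexpert$, and note that $e_0 = 0$ and that $(L_0,L_1)$-smoothness of both policies (\Cref{assumption:smoothness_of_exp_and_learned}) makes the Jacobian of $g$ be $2L_1$-Lipschitz.

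Next I would observe that $\widehat{x}_{t+1} = A\widehat{x}_t + B\piexpert(\widehat{x}_t) + B\Delta_t = \fcl{\piexpert}(\widehat{x}_t, \Delta_t)$, so $\widehat{x}_t$ is exactly the perturbed trajectory of \Cref{def:locIncStab} under $\pi = \piexpert$ with nominal trajectory $x^\star_t$. Hence, as long as $\norm{\Delta_k} < \eta$ for all $k$, \Cref{assum:stable} gives $\norm{e_t} \le \gamma \max_{k<t}\norm{\Delta_k}$. To control $\norm{\Delta_k} = \norm{g(\widehat{x}_k)}$ I would Taylor-expand $g$ about the expert state $x^\star_k$ and combine the two bounds above with the Lipschitz-Jacobian property of $g$ to obtain
\begin{align*}
\norm{\Delta_k} \;\le\; \frac{\epsilon_0}{N} + \frac{\epsilon_1}{N}\norm{e_k} + L_1\norm{e_k}^2 .
\end{align*}

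Finally I would close the argument by induction on $t$ with the hypothesis $\norm{e_k} \le 8\gamma\epsilon_0/N =: \rho$ for all $k < t$ (base case $e_0 = 0$): substituting $\norm{e_k} \le \rho$ into the last display and using the standing hypotheses $\epsilon_1/N \le 1/(4\gamma)$ and $\epsilon_0/N \le 1/(16\gamma^2 L_1)$ yields $\tfrac{\epsilon_1}{N}\rho \le \tfrac{2\epsilon_0}{N}$ and $L_1\rho^2 \le \tfrac{4\epsilon_0}{N}$, so $\norm{\Delta_k} \le 7\epsilon_0/N$ for all $k < t$ (which, since the hypotheses force $\epsilon_0/N$ to be small, also stays below $\eta$ so that $\delta$ISS applies), and therefore $\norm{e_t} \le \gamma \cdot 7\epsilon_0/N \le \rho$, completing the induction and hence the bound $\norm{\widehat{x}_t - x^\star_t} \le 8\gamma\epsilon_0/N$ for all $t \ge 0$. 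The step I expect to be the crux is precisely this coupling: \Cref{assum:bounds} only certifies that $\pilearned$ and $\piexpert$ are close on the \emph{expert's} state distribution, not on the off-distribution states $\widehat{x}_k$ that the learned policy actually visits, so the Taylor expansion around $x^\star_k$ is what transfers the hypothesis to $\widehat{x}_k$ at the price of the $\tfrac{\epsilon_1}{N}\norm{e_k}$ and $L_1\norm{e_k}^2$ correction terms — and the smallness conditions on $\epsilon_0/N$ and $\epsilon_1/N$ are exactly what make the resulting recursion contract instead of blowing up.
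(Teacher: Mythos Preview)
The paper does not supply its own proof of this statement: it is recorded as a \emph{Fact} with an explicit citation to \cite{pfrommer2022tasil}, Corollary A.1, and no argument is given in the text or the appendix. So there is nothing in the present paper to compare your proposal against.

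That said, your reconstruction is the standard one and the arithmetic closes: writing $\widehat{x}_{t+1} = \fcl{\piexpert}(\widehat{x}_t,\Delta_t)$ with $\Delta_t = \pilearned(\widehat{x}_t)-\piexpert(\widehat{x}_t)$, Taylor-expanding the policy difference about $x^\star_k$ using the $2L_1$-Lipschitz Jacobian of $g$, and feeding the resulting bound $\norm{\Delta_k}\le \epsilon_0/N + (\epsilon_1/N)\norm{e_k} + L_1\norm{e_k}^2$ into the $\delta$ISS inequality $\norm{e_t}\le \gamma\max_{k<t}\norm{\Delta_k}$ yields an induction that propagates $\norm{e_k}\le 8\gamma\epsilon_0/N$ under exactly the two smallness conditions in the statement. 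The one place where the statement (and hence your proof) is informal is the $\eta$-threshold in \Cref{def:locIncStab}: the Fact only says ``$\gamma$-locally-incrementally stable'' without quantifying $\eta$, and your remark that ``the hypotheses force $\epsilon_0/N$ to be small'' does not by itself guarantee $7\epsilon_0/N<\eta$, since $\eta$ is a property of $\piexpert$ unrelated to the user-chosen $N$. This is a gap in the \emph{statement} rather than in your argument, and the cited source presumably handles it explicitly.
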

The upshot of this result is that to match the trajectory of the MPC policy $\piexpert$ with high probability, provided $\piexpert$ is $(L_0, L_1)$-smooth, {we  need to match the Jacobian and value of $\piexpert$ on \emph{only} $N  K$ pieces.} 
This is in contrast to prior work such as \cite{maddalena2020neural, karg2020efficient, chen2018approximating} on approximating explicit MPC, which require sampling new control inputs during training (in a reinforcement learning-like fashion) or post-training verification of the stability properties of the network.

However, as we noted in \Cref{assumption:smoothness_of_exp_and_learned}, \textit{these strong guarantees crucially require a smooth expert controller}. In the following sections, we investigate two approaches for smoothing $\pimpc$: randomized smoothing and barrier MPC. 

\subsection{Randomized Smoothing}
We first consider randomized smoothing \cite{duchi2012randomized} as a baseline 
approach for smoothing $\piexpert$.  Here, the imitator $\pi^{\mathrm{rs}}$ is learned with a loss function that randomly samples  
noise drawn from a  probability distribution chosen to smooth the policy. This approach corresponds to the following controller.
\begin{definition}[Randomized Smoothed MPC]\label{def:rand_smoo} Given a control policy $\pimpc$ of the form \cref{eq:pi_mpc}, a desired zero-mean noise distribution $\mathcal{P}$, and magnitude $\epsilon > 0$, the randomized-smoothing based MPC is defined as:
\begin{align*}
    \pi^{\mathrm{rs}}(x) := \E_{w \sim \mathcal{P}}[\pimpc(x+ \epsilon w)].
\end{align*}
\end{definition}

The distribution $\mathcal{P}$ in \Cref{def:rand_smoo} is chosen such that the following guarantees on error and smoothness hold. 

\begin{fact}[c.f. \cite{duchi2012randomized}, Appendix E, Lemma 7-9] \label{thm:randomized_bounds}
For $\mathcal{P} \in  \{\mathrm{Unif}(B_{\ell_2}(1)), \mathrm{Unif}(B_{\ell_\infty}(1)), \,\mathcal{N}(0,I)\}$, there exist $L_0, L_1$ that depend on $d_x$ and the Lipschitz constant of $\pimpc$ such that 
\begin{align*}
    \|\pi^{\mathrm{rs}}(x) - \pi^{\mathrm{mpc}}(x)\| &\leq L_0 \epsilon \quad \forall x \in \X, \\
    \|\nabla \pi^{\mathrm{rs}}(x) - \nabla \pi^{\mathrm{rs}}(y)\| &\leq \frac{L_1}{\epsilon}\|x - y\| \quad \forall x,y \in \X.
\end{align*}
\end{fact}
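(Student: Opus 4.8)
The plan is to adapt the randomized‑smoothing estimates of \cite{duchi2012randomized} to our vector‑valued, non‑differentiable controller. The only structural property of $\pimpc$ I would invoke is that it is globally $L$‑Lipschitz on a neighborhood of $\X$: by \Cref{thm:dudx} it is a continuous piecewise‑affine map with finitely many affine pieces on a polytope, so $L \le \max_\sigma \norm{K_\sigma} < \infty$, and this $L$ is exactly the ``Lipschitz constant of $\pimpc$'' appearing in the statement. (One technical point I would make explicit: the smoothing evaluates $\pimpc(x+\epsilon w)$ at points that may leave $\X$; for the two bounded kernels this is harmless when $\epsilon$ is small, since \Cref{eq:reformulated} stays feasible on a neighborhood of $\X$, and for the Gaussian kernel one first replaces $\pimpc$ by an $L$‑Lipschitz extension to $\R^{d_x}$, e.g.\ via Kirszbraun's theorem.) The first inequality is then immediate from Jensen and Lipschitzness:
\[
\norm{\pi^{\mathrm{rs}}(x) - \pimpc(x)} \le \E_{w\sim\mathcal{P}}\norm{\pimpc(x+\epsilon w) - \pimpc(x)} \le L\,\epsilon\,\E_{w}\norm{w},
\]
and $\E_w\norm{w}$ is a constant depending only on $d_x$ for each kernel ($\tfrac{d_x}{d_x+1}\le 1$ for $\mathrm{Unif}(B_{\ell_2}(1))$, and $\le\sqrt{d_x}$ for the other two via $\E\norm{w}\le\sqrt{\E\norm{w}^2}$), giving the claim with $L_0$ equal to $L$ times the largest of these.

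For the gradient bound the obstacle is that neither $\pimpc$ nor (for the uniform kernels) the kernel density is differentiable, so I cannot differentiate under the integral naively; the remedy is to push all $x$‑dependence off of $\pimpc$. For $\mathcal{P}=\mathcal{N}(0,I)$ I would change variables $z = x+\epsilon w$ so that $x$ appears only in the smooth Gaussian density, differentiate under the integral there, and change back, obtaining the closed form $\nabla\pi^{\mathrm{rs}}(x) = \epsilon^{-1}\,\E_{w}[\pimpc(x+\epsilon w)\,w^\top]$, which needs only that $\pimpc$ is measurable with at most linear growth. Subtracting the same identity at $y$ and using that a rank‑one matrix $ab^\top$ has operator norm $\norm{a}\norm{b}$,
\[
\norm{\nabla\pi^{\mathrm{rs}}(x) - \nabla\pi^{\mathrm{rs}}(y)} = \epsilon^{-1}\,\norm{\E_{w}\big[\big(\pimpc(x+\epsilon w) - \pimpc(y+\epsilon w)\big)w^\top\big]} \le \frac{L\,\E_w\norm{w}}{\epsilon}\,\norm{x-y} \le \frac{L\sqrt{d_x}}{\epsilon}\,\norm{x-y}.
\]

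For $\mathcal{P}=\mathrm{Unif}(B)$ with $B$ the unit $\ell_2$‑ or $\ell_\infty$‑ball, the density is a scaled indicator, so $\pi^{\mathrm{rs}}(x) = \tfrac{1}{\mathrm{vol}(\epsilon B)}\int_{x+\epsilon B}\pimpc(z)\,dz$ is a moving‑domain average, and I would push the $x$‑dependence onto the \emph{domain} via the Reynolds transport / divergence theorem: for continuous $\pimpc$,
\[
\nabla\pi^{\mathrm{rs}}(x) = \frac{1}{\mathrm{vol}(\epsilon B)}\int_{\partial(x+\epsilon B)}\pimpc(z)\,\nu(z)^\top\,dS(z),
\]
with $\nu$ the outward unit normal (the $\ell_\infty$ ball has corners, but its boundary is Lipschitz, $\nu$ is defined a.e., and the divergence theorem still applies). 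Reparametrizing both surface integrals over $\partial B$ with matched normals, differencing, and using $L$‑Lipschitzness of $\pimpc$ and $\norm{\nu}\equiv 1$,
\[
\norm{\nabla\pi^{\mathrm{rs}}(x) - \nabla\pi^{\mathrm{rs}}(y)} \le \frac{\mathrm{area}(\partial(\epsilon B))}{\mathrm{vol}(\epsilon B)}\,L\,\norm{x-y} = \frac{L}{\epsilon}\cdot\frac{\mathrm{area}(\partial B)}{\mathrm{vol}(B)}\,\norm{x-y} = \frac{L\,d_x}{\epsilon}\,\norm{x-y},
\]
since $\mathrm{area}(\partial B)/\mathrm{vol}(B)=d_x$ for both the unit $\ell_2$‑ball and the unit cube. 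Taking $L_1 := L\,d_x$ then covers all three kernels.

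The main obstacle, then, is not any single estimate but the bookkeeping needed to route the $x$‑dependence away from the non‑smooth $\pimpc$ — onto the Gaussian density (Stein‑type integration by parts) or onto the ball of integration (divergence theorem) — so that only Lipschitz continuity of $\pimpc$ is ever used; the delicate points are justifying the divergence theorem on the non‑smooth $\ell_\infty$ boundary, correctly tracking the $d_x$‑dependent geometric constants ($\E_w\norm{w}$ and $\mathrm{area}/\mathrm{vol}$), and the domain‑of‑definition caveat for $\pimpc$ noted above.
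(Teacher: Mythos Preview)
The paper does not prove this statement: it is recorded as a \emph{Fact} and attributed to \cite{duchi2012randomized}, Appendix~E, Lemmas~7--9, with no argument given in the body or appendix. So there is no ``paper's own proof'' to compare against; the only question is whether your sketch is sound.

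It is. Your argument is exactly the standard randomized-smoothing machinery in the cited reference, adapted componentwise to the vector-valued $\pimpc$: the approximation bound is immediate from Lipschitzness and the kernel's first moment; for the gradient-Lipschitz bound you correctly route the $x$-dependence away from the nondifferentiable integrand, using the Stein identity $\nabla\pi^{\mathrm{rs}}(x)=\epsilon^{-1}\E_w[\pimpc(x+\epsilon w)\,w^\top]$ in the Gaussian case and the divergence/Reynolds transport theorem for the uniform kernels. The geometric constants you track ($\E\|w\|\le\sqrt{d_x}$ and $\mathrm{area}(\partial B)/\mathrm{vol}(B)=d_x$ for both the Euclidean ball and the cube) are right, and the caveats you flag---the Lipschitz boundary of the $\ell_\infty$ ball and the need to extend $\pimpc$ outside $\X$ for unbounded noise---are the genuine technical points and you handle them appropriately. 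If anything, your write-up is more explicit than what the paper relies on by citation.
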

Using randomized smoothing to obtain a smoothed policy has the following key disadvantages: Firstly the expectation $ \E_{w \sim \mathcal{P}}[\pimpc(x+ \epsilon w)]$ is evaluated via sampling, which means the policy must be continuously re-evaluated during training in order to guarantee a smooth learned policy. Secondly, smoothing in this manner may cause $\pi^{\mathrm{rs}}$ to violate state constraints. Finally, simply smoothing the policy may not preserve the stability of $\pimpc$. As we shall show, using barrier MPC as a smoothed policy overcomes all these drawbacks.

\section{Our Approach to Smoothing: Barrier MPC}\label{sec:barrier_mpc_all}
Having described the guarantees obtained via randomized smoothing, we now consider smoothing via {barrier} functions.
\begin{definition}[\cite{nesterov1994interior}]\label{def:sc_and_scb}
Given an {open} convex set $Q\subseteq \mathbb{R}^n$, a function $f:Q\mapsto\mathbb{R}$ is  self-concordant on  $Q$ if for any $\vx\in Q$ and any direction $h\in \mathbb{R}^n$, the following inequality holds: 
\[
\lvert \mathcal{D}^3 f(\vx)[h, h, h]\rvert \leq 2(\mathcal{D}^2 F(\vx)[h, h])^{3/2}, 
\] 
where $\mathcal{D}^k f(\vx)[h_1, \dotsc, h_k]$ is the $k^\mathrm{th}$ derivative of $f$ at $\vx$ along  directions $h_1,\dotsc,h_k$. {Moreover,} $f$ is a $\nu$-self-concordant barrier on $Q$ if { it further satisfies  $\lim_{x\to \partial Q} f(x) = + \infty$} and the inequality  $\nabla f(\vx)^\top (\nabla^2 f(\vx))^{-1} \nabla f(\vx) \leq \nu$ for any $x\in Q$.
\end{definition}

The self-concordance property essentially says that locally, the Hessian does not change too fast --- it has therefore proven extremely useful in interior-point methods to design fast algorithms for (constrained) convex programming~\cite{zbMATH03301975, karmarkar1984new} and has also found use in model-predictive control \cite{wills2004barrier}.

We consider using {barrier MPC} as a natural alternative to randomized smoothing of \Cref{eq:pi_mpc}. In barrier  MPC, the inequality constraints  in the  optimal control
problem are eliminated by incorporating them into the cost
function via suitably scaled barrier terms. In this paper, we work only with the log-barrier, which turns a  constraint $f(x)\geq 0$ into the term $-\eta \log(f(x))$ in the minimization objective and is the standard choice of barrier on polytopes~\cite{nesterov1994interior}. 

Concretely, starting from our MPC reformulation in \Cref{eq:reformulated}, the barrier MPC we work with is defined as follows.

\begin{figure*}
    \centering
    \includegraphics[width=1\linewidth]{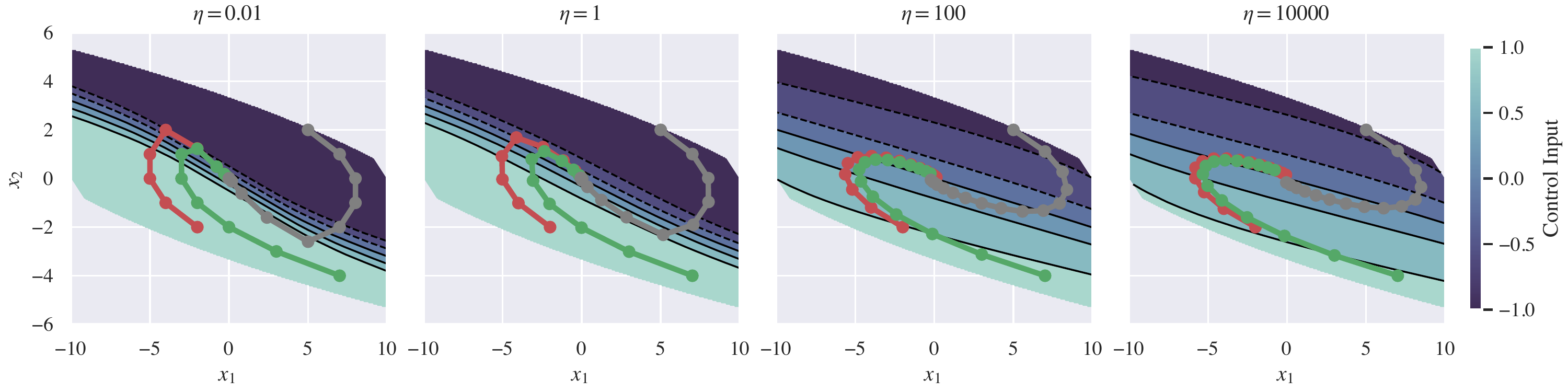}
    \caption{Visualizations of the log-barrier MPC control policy and several trajectories for the same system as \Cref{fig:explicit_mpc} and different choices of $\eta$.}
    \label{fig:smoothing_contours}
\end{figure*}

\begin{problem}[Barrier MPC]\label{def:barr_mpc_formal} Given an MPC as in \Cref{eq:reformulated} and  weight $\eta > 0$, the barrier MPC is defined by minimizing, over the input sequence $\vueta\in\R^{\T\cdot d_u}$, the cost
\[ 
\begin{array}{ll}
\vspace{0.5em}
V^\eta(x_0, \vueta)&:= \tfrac{1}{2}\vueta^\top \mH \vueta  - \vx_0^\top \mF \vueta - \eta \left[1^\top\log(\phi(x_0, \vueta)) - \vd^\top \vueta    \right],
\end{array}\numberthis\label{eq:barrierMPC}\] where $\phi_i(x_0, \vueta)$ (see \Cref{eq:phi}) is the residual of the $i^\mathrm{th}$ constraint for $\vx_0$ and $\vueta$, and choosing $\vd :=\nabla_{\vueta} \sum_{i=1}^m\log(\phi_i(0, \vueta)) \vert_{\vueta=0}$ ensures $\argmin_{\vueta} V^\eta(0, \vueta) = 0.$ We overload $\vueta(x_0)$ to also denote the minimizer of \Cref{eq:barrierMPC} for a given $x_0$ and use $\pibmpc(x) := \argmin_{u_0} \min_{u_{1:\T-1}} V^\eta(x,u)$ for the associated control policy.
\end{problem}

The following result, based on standard techniques to analyze the sub-optimality gap in interior-point methods, bounds the distance between  the optimal solution of \Cref{def:barr_mpc_formal} and that of explicit MPC in \Cref{eq:reformulated}.  

\begin{theorem} 
 Suppose that $\vueta$ and $u^\star$ are, respectively, the optimizers of \Cref{def:barr_mpc_formal} and \Cref{eq:reformulated}. Then we have the following bound in terms of $\eta$ in  \Cref{eq:barrierMPC}:
\[  \|\vueta - \vu^\star\|\leq O(\sqrt{\eta}). \]
\end{theorem}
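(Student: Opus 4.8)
\emph{Proof proposal.} The plan is to reduce the claim to two standard ingredients: an interior-point suboptimality bound and strong convexity. Write the barrier objective as $V^\eta(\vx_0,\vu) = V(\vx_0,\vu) + \eta\,\Phi(\vu)$, where $V(\vx_0,\vu) = \tfrac12\vu^\top\mH\vu - \vx_0^\top\mF\vu$ is the original objective and $\Phi(\vu) := -1^\top\log\phi(\vx_0,\vu) + \vd^\top\vu$ is the log-barrier of the feasible polytope $\mathcal P := \{\vu : \mG\vu \le \vw + \mP\vx_0\}$ together with the fixed linear correction $\vd^\top\vu$. I will use two structural facts: (i) $V(\vx_0,\cdot)$ is $\mu$-strongly convex in $\vu$ with $\mu := \lambda_{\min}(\mH) > 0$, since $\mH = R_{0:\T-1} + \widehat B^\top Q_{1:\T}\widehat B \succeq R_{0:\T-1} \succ 0$; and (ii) $\vueta$ lies in the interior of $\mathcal P$ --- indeed $\mathcal P \subseteq B(0,R)$ is compact with nonempty interior (it contains a ball of radius $r$) and $V^\eta(\vx_0,\cdot) \to +\infty$ as $\vu \to \partial\mathcal P$, so the minimum is attained at an interior point, whence $\phi_i(\vx_0,\vueta) > 0$ for all $i$.

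First I would bound the suboptimality $V(\vx_0,\vueta) - V(\vx_0,\vu^\star)$. Since $\vueta$ is the minimizer of $V^\eta(\vx_0,\cdot)$ and, by (ii), lies in the interior of $\mathcal P$, stationarity $\nabla_{\vu}V^\eta(\vx_0,\vueta) = 0$ gives $\nabla_{\vu}V(\vx_0,\vueta) = -\eta\,\nabla\Phi(\vueta)$, and convexity of $V(\vx_0,\cdot)$ then yields
\[ V(\vx_0,\vu^\star) \ge V(\vx_0,\vueta) + \langle\nabla_{\vu}V(\vx_0,\vueta),\,\vu^\star-\vueta\rangle = V(\vx_0,\vueta) - \eta\,\langle\nabla\Phi(\vueta),\,\vu^\star-\vueta\rangle. \]
It remains to upper bound $\langle\nabla\Phi(\vueta),\,\vu^\star-\vueta\rangle$. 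Using $\phi_i(\vx_0,\vu) = (\vw+\mP\vx_0)_i - (\mG\vu)_i$, a one-line telescoping computation gives, for the barrier part,
\[ \big\langle -\nabla\big(1^\top\log\phi(\vx_0,\cdot)\big)(\vueta),\,\vu^\star-\vueta\big\rangle = \sum_{i=1}^{m}\Big(1 - \tfrac{\phi_i(\vx_0,\vu^\star)}{\phi_i(\vx_0,\vueta)}\Big) \le m, \]
since $\phi_i(\vx_0,\vu^\star) \ge 0$ (feasibility of $\vu^\star$) and $\phi_i(\vx_0,\vueta) > 0$ (interiority); this is precisely the defining inequality of an $m$-self-concordant barrier. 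For the linear part, Cauchy--Schwarz together with $\mathcal P \subseteq B(0,R)$ gives $\langle\vd,\,\vu^\star-\vueta\rangle \le 2R\|\vd\|$. Hence $V(\vx_0,\vueta) - V(\vx_0,\vu^\star) \le \eta(m + 2R\|\vd\|) =: C\eta$.

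Then I would convert this into the distance bound via strong convexity of $V(\vx_0,\cdot)$ at $\vu^\star$:
\[ V(\vx_0,\vueta) \ge V(\vx_0,\vu^\star) + \langle\nabla_{\vu}V(\vx_0,\vu^\star),\,\vueta-\vu^\star\rangle + \tfrac{\mu}{2}\|\vueta-\vu^\star\|^2, \]
where the inner product is nonnegative because $\vu^\star$ minimizes $V(\vx_0,\cdot)$ over the convex set $\mathcal P$ and $\vueta \in \mathcal P$, so the first-order variational inequality applies. Therefore $\tfrac{\mu}{2}\|\vueta-\vu^\star\|^2 \le V(\vx_0,\vueta) - V(\vx_0,\vu^\star) \le C\eta$, i.e. $\|\vueta-\vu^\star\| \le \sqrt{2C\eta/\mu} = O(\sqrt\eta)$, with the hidden constant depending only on $m$, $R$, $\|\vd\|$, and $\lambda_{\min}(\mH)$. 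The main thing to be careful about is the extra linear term $\vd^\top\vu$: it is harmless because it leaves the Hessian (hence self-concordance) untouched and, on the bounded feasible set, contributes only the $O(1)$ additive term above; the remaining nonroutine point is justifying interiority of $\vueta$ and uniformity of all constants in $\vx_0$, both of which follow from the standing assumption that $\mathcal P$ contains a ball of radius $r$ and is contained in one of radius $R$.
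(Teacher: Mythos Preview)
Your proof is correct and follows essentially the same route as the paper: first-order optimality at $\vueta$ combined with the self-concordant-barrier inequality $\langle\nabla\Phi(\vueta),\,\vu^\star-\vueta\rangle \le \nu$ to bound suboptimality, then strong convexity of the quadratic cost to convert this into a distance bound. The only cosmetic differences are that the paper applies strong convexity in one shot at $\vueta$ (getting both $q(\vueta)-q(\vu^\star)$ and $\tfrac{\alpha}{2}\|\vueta-\vu^\star\|^2$ on the left simultaneously) and invokes \Cref{lem:linear_plus_barrier_sc} to absorb the linear recentering term $\vd^\top\vu$ into the self-concordance parameter, whereas you split the argument into two convexity applications and handle the log-barrier bound $\le m$ and the $\vd$-term $\le 2R\|\vd\|$ explicitly.
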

\begin{proof} In this proof, we use $K$ for the constraint polytope of \Cref{eq:reformulated}. First,  \Cref{lem:linear_plus_barrier_sc}  establishes that the recentered log-barrier in \Cref{def:barr_mpc_formal} is also a self-concordant barrier with some self-concordance parameter $\nu$. 
Since $\vueta= \arg\min_u q(u) + \eta \phi_K(u)$, where $q$ is the quadratic cost function of \Cref{def:barr_mpc_formal} and $\phi_K$ is the centered log barrier on $K$, we have by first-order optimality:  \[ \nabla q(\vueta) = - \eta \nabla\phi_K(\vueta). \numberthis\label{eq:opt_ueta_err_bound}\] Denote by $\alpha$ the strong convexity parameter of the cost function in  \Cref{def:barr_mpc_formal} and by $\nu$ the self-concordance parameter of the barrier $\phi_K$. 
Then,  
\begin{align*}
    \left\{ q(\vueta) - q(u^\star) \right\} + \frac{1}{2}\alpha \|\vueta-u^\star\|_2^2 &\leq \nabla q(\vueta)^\top (\vueta - u^\star)\\ 
    &= \eta \cdot\nabla \phi_K(\vueta)^\top (u^\star - \vueta) \\ 
    &\leq \eta \nu, 
\end{align*} by $\alpha$-strong convexity of $q$,  \Cref{eq:opt_ueta_err_bound}, and applying \Cref{thm:inner_prod_ub_nu}. 
Since both $q(\vueta) - q(u^\star)$ and $\frac{1}{2}\alpha\|\vueta-u^\star\|_2^2$ are positive, we can bound the latter   by $\eta \nu$ to finish the proof. 
\end{proof}

We now proceed to establish the following technical lemma, which we later use in our key smoothness result. 

\begin{lemma}\label{lem:dueta_dx}
The solution to the barrier formulated MPC  in \Cref{def:barr_mpc_formal} evolves with respect to $x_0$ as 
\begin{align*}\label{eq:DuetaDx}
    \DuetaDx &= \mH^{-1}[\mF^\top - \mG^\top(\mG \mH^{-1}\mG^\top + \mLambda)^{-1}(\mG \mH^{-1} \mF^\top - \mP)],
\end{align*} where $\mLambda:= \eta^{-1} \cdot \Phi^{2}$, with $\Phi := \Diag(\phi(x_0, \vueta(x_0)))$ being the diagonal matrix constructed via the residual $\phi(x_0, \vueta(x_0))$ as defined in \Cref{eq:phi}.
\end{lemma}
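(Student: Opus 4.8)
The plan is to characterize $\vueta(\vxn)$ through the first-order (stationarity) condition of \Cref{def:barr_mpc_formal}, differentiate that condition in $\vxn$ via the implicit function theorem, and then simplify the resulting matrix expression with the Sherman--Morrison--Woodbury identity. Throughout, write $\phi(\vxn,\vu) = \vw + \mP\vxn - \mG\vu$ for the residual vector of the reformulated MPC \Cref{eq:reformulated}, and $\Phi := \Diag(\phi(\vxn,\vueta(\vxn)))$; note $\Phi$ is invertible because the log-barrier forces $\vueta(\vxn)$ strictly inside the polytope, so $\phi(\vxn,\vueta(\vxn)) > 0$ entrywise.

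First I would record the gradient of the barrier cost in \cref{eq:barrierMPC} with respect to $\vu$. Since $\nabla_{\vu}\log\phi_i = -\phi_i^{-1}(\mG^\top e_i)$, where $e_i$ is the $i$-th standard basis vector, one gets
\[
\nabla_{\vu} V^\eta(\vxn,\vu) = \mH\vu - \mF^\top\vxn + \eta\,\mG^\top\Phi^{-1}\mathbf{1} + \eta\vd .
\]
Because $\vueta(\vxn)$ is the minimizer, $\nabla_{\vu}V^\eta(\vxn,\vueta(\vxn)) = 0$ as an identity in $\vxn$. Next I would differentiate this identity in $\vxn$: the constant term $\eta\vd$ drops, and using that $\phi$ is affine ($\partial\phi/\partial\vu = -\mG$, $\partial\phi/\partial\vxn = \mP$) one obtains $\tfrac{d}{d\vxn}(\Phi^{-1}\mathbf{1}) = -\Phi^{-2}\big(\mP - \mG\,\DuetaDx\big)$. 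Substituting and collecting the terms multiplying $\DuetaDx$ yields the linear system
\[
\big(\mH + \eta\,\mG^\top\Phi^{-2}\mG\big)\,\DuetaDx = \mF^\top + \eta\,\mG^\top\Phi^{-2}\mP .
\]
Writing $\mLambda = \eta^{-1}\Phi^2$ so that $\eta\Phi^{-2} = \mLambda^{-1}$, and observing $\mH + \mG^\top\mLambda^{-1}\mG \succ 0$ (since $\mH \succ 0$ and $\mG^\top\mLambda^{-1}\mG \succeq 0$) — which also legitimizes the use of the implicit function theorem, hence the differentiability of $\vueta$ — this gives
\[
\DuetaDx = \big(\mH + \mG^\top\mLambda^{-1}\mG\big)^{-1}\big(\mF^\top + \mG^\top\mLambda^{-1}\mP\big).
\]

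Finally I would apply Woodbury, $\big(\mH + \mG^\top\mLambda^{-1}\mG\big)^{-1} = \mH^{-1} - \mH^{-1}\mG^\top(\mLambda + \mG\mH^{-1}\mG^\top)^{-1}\mG\mH^{-1}$, substitute, and reorganize. The $\mF^\top$ contribution is immediately $\mH^{-1}\mF^\top - \mH^{-1}\mG^\top(\mLambda + \mG\mH^{-1}\mG^\top)^{-1}\mG\mH^{-1}\mF^\top$. The $\mP$ contribution, after factoring out $\mH^{-1}\mG^\top$, is $\big[\mLambda^{-1} - (\mLambda + \mM)^{-1}\mM\mLambda^{-1}\big]\mP$ with $\mM := \mG\mH^{-1}\mG^\top$, and the elementary identity $\mLambda^{-1} - (\mLambda+\mM)^{-1}\mM\mLambda^{-1} = (\mLambda+\mM)^{-1}\big[(\mLambda+\mM) - \mM\big]\mLambda^{-1} = (\mLambda+\mM)^{-1}$ collapses it to $\mH^{-1}\mG^\top(\mG\mH^{-1}\mG^\top + \mLambda)^{-1}\mP$. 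Adding the two contributions recovers exactly $\mH^{-1}\big[\mF^\top - \mG^\top(\mG\mH^{-1}\mG^\top + \mLambda)^{-1}(\mG\mH^{-1}\mF^\top - \mP)\big]$, as claimed. (Note that as $\eta \to 0$, $\mLambda \to 0$ and this reduces to $K_\sigma$ of \Cref{thm:dudx} with the active set replaced by all constraints, which is a useful sanity check.)

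\textbf{Main obstacle.} Nothing here is deep; the two points requiring care are (i) getting every sign right — the residual's $-\mG$ dependence on $\vu$, the sign inside $\nabla\log$, and the net sign in the implicit-differentiation step — and (ii) the final algebraic rearrangement into the stated compact form, which is cleanest via the $(\mLambda+\mM)^{-1}$-factoring identity above rather than brute-force expansion. A one-line justification that $\Phi$ is invertible (strict feasibility of the barrier minimizer) should also be included.
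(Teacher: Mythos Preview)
Your proposal is correct and follows essentially the same approach as the paper: write the first-order optimality condition for \Cref{eq:barrierMPC}, differentiate it implicitly in $\vxn$ to obtain $(\mH+\eta\mG^\top\Phi^{-2}\mG)\,\DuetaDx=\mF^\top+\eta\mG^\top\Phi^{-2}\mP$, and then apply the Woodbury identity. You spell out more detail than the paper does (the strict-feasibility justification for $\Phi$ invertible, the $\mH\succ 0$ check that legitimizes the implicit function theorem, and the explicit $(\mLambda+\mM)^{-1}$ simplification), but the route and key steps are identical.
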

\begin{proof}
The optimality condition associated with minimizing \Cref{eq:barrierMPC} is:
\[ \mH \vueta(x_0)- \mF^\top \vx_0 + \eta \sum_{i = 1}^m \left(\frac{\vg_i}{\phi_i(x_0, \vueta(x_0))} + \vd_i\right) = 0.\]
Differentiating with respect to $\vx_0$ and rearranging yields \[\DuetaDx = (\mH + \eta \mG^\top \Phi^{-2} \mG)^{-1} (\mF^\top + \eta \mG^\top \Phi^{-2} \mP),\] which upon applying \Cref{fact:shermanMorrisonWoodbury} 
and plugging in $\mLambda$ yields the claimed rate. 
\end{proof} 

We are now ready to state \Cref{thm:convex_combination}, where we connect the rates of evolution of the solution \Cref{eq:K_sigma} to the constrained MPC and that in \Cref{lem:dueta_dx} of the barrier MPC. Put simply, this result tells us that {solving barrier MPC implicitly interpolates between a potentially exponential number of affine pieces from the original explicit MPC problem}. This important connection helps us get a handle on the smoothness of barrier MPC as the rate at which this interpolation changes.

\begin{lemma}\label{thm:convex_combination} With $K_\sigma$ as defined as in \Cref{thm:dudx},   $h_\sigma = \det([G H^{-1}G^\top]_\sigma)\prod_{i=1}^m (\eta^{-1}\phi^2_i)^{1 - \sigma_i}$ from \Cref{lem:split_into_adj}, and the set $S := \{\sigma \, | \, \det([GH^{-1}G^\top]_\sigma) > 0\}$, 
the rates of evolution of the solutions to the constrained MPC (in \Cref{eq:K_sigma}) and the barrier MPC (in \Cref{lem:dueta_dx}) 
are connected as:
\begin{align*}
\DuetaDx = \frac{1}{\sum_{\sigma \in S}h_{\sigma}}\sum_{\sigma \in S} h_\sigma K_\sigma.
\end{align*}
\end{lemma}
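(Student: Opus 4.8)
The plan is to reduce \Cref{thm:convex_combination} to a matrix identity about $\mM := \mG\mH^{-1}\mG^\top$ and then resolve it through the multilinear expansion of a determinant (and its adjugate) under the diagonal perturbation $\mLambda = \eta^{-1}\Phi^2$. First I would note that $\DuetaDx$ from \Cref{lem:dueta_dx} and each gain $K_\sigma$ from \Cref{thm:dudx} have the same shape $\mH^{-1}\mF^\top - \mH^{-1}\mG^\top Z (\mG\mH^{-1}\mF^\top - \mP)$, with $Z = (\mM + \mLambda)^{-1}$ for the barrier solution and $Z = (\mM)_\sigma^{-1}$ (the zero-padded inverse of the principal submatrix $[\mM]_\sigma$) for $K_\sigma$. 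Because the proposed weights $h_\sigma / \sum_{\sigma \in S} h_\sigma$ sum to one, the constant term $\mH^{-1}\mF^\top$ is matched automatically, so it suffices to prove
\[
\mG^\top(\mM + \mLambda)^{-1} \;=\; \frac{1}{\sum_{\sigma \in S} h_\sigma}\sum_{\sigma \in S} h_\sigma\, \mG^\top (\mM)_\sigma^{-1}.
\]
Writing $(\mM + \mLambda)^{-1} = \adj(\mM + \mLambda)/\det(\mM + \mLambda)$ and using $\mLambda \succ 0$ (so $\det(\mM + \mLambda) > 0$), this amounts to clearing the denominator.

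Next I would use the expansion behind \Cref{lem:split_into_adj}: since $\mLambda$ is diagonal, splitting each row of $\mM + \mLambda$ into its $\mM$-part and its $\mLambda$-part and using multilinearity of the determinant gives $\det(\mM + \mLambda) = \sum_\sigma \det([\mM]_\sigma)\prod_{i:\sigma_i = 0}\eta^{-1}\phi_i^2 = \sum_\sigma h_\sigma$, and the same expansion applied to each cofactor gives $\adj(\mM + \mLambda) = \sum_\sigma \big(\prod_{i:\sigma_i=0}\eta^{-1}\phi_i^2\big)\adj(\mM)_\sigma$, where $\adj(\mM)_\sigma$ is the zero-padded adjugate of $[\mM]_\sigma$. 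For $\sigma \in S$ the submatrix is invertible and $\big(\prod_{i:\sigma_i=0}\eta^{-1}\phi_i^2\big)\adj(\mM)_\sigma = h_\sigma (\mM)_\sigma^{-1}$; for $\sigma \notin S$ one has $h_\sigma = 0$, so such terms drop out of the determinant sum (leaving $\sum_{\sigma\in S}h_\sigma$, which is positive) but not necessarily out of the adjugate sum, and must be handled separately.

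The crux, and the step I expect to be the main obstacle, is to show that each remaining $\sigma \notin S$ is annihilated once we multiply on the left by $\mG^\top$, i.e.\ $\mG^\top \adj(\mM)_\sigma = 0$ whenever $[\mM]_\sigma$ is singular. Here I would use $\mH \succ 0$ to write $[\mM]_\sigma = \mG_\sigma \mH^{-1}\mG_\sigma^\top$, where $\mG_\sigma$ is the row submatrix of $\mG$ indexed by $\sigma$, so that $\ker [\mM]_\sigma = \ker \mG_\sigma^\top$. If $[\mM]_\sigma$ has corank at least $2$, its adjugate is zero; if it has corank exactly $1$, then from $[\mM]_\sigma \adj([\mM]_\sigma) = \det([\mM]_\sigma) I = 0$ and symmetry of $[\mM]_\sigma$ we get $\adj([\mM]_\sigma) = c\, v v^\top$ for a scalar $c$ and a vector $v$ spanning $\ker \mG_\sigma^\top$, and the zero-padding $\bar v$ of $v$ satisfies $\mG^\top \bar v = \mG_\sigma^\top v = 0$; in either case $\mG^\top \adj(\mM)_\sigma = 0$. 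Consequently $\mG^\top \adj(\mM + \mLambda) = \sum_{\sigma \in S} h_\sigma\, \mG^\top (\mM)_\sigma^{-1}$; dividing by $\det(\mM + \mLambda) = \sum_{\sigma\in S}h_\sigma$ gives the displayed identity, and substituting back (using $\mH^{-1}\mF^\top = \tfrac{1}{\sum_{\sigma\in S}h_\sigma}\sum_{\sigma\in S}h_\sigma\,\mH^{-1}\mF^\top$) yields $\DuetaDx = \tfrac{1}{\sum_{\sigma\in S}h_\sigma}\sum_{\sigma\in S}h_\sigma K_\sigma$. It is worth flagging that the bare identity $(\mM+\mLambda)^{-1} = \tfrac{\sum_{\sigma\in S}h_\sigma (\mM)_\sigma^{-1}}{\sum_{\sigma\in S}h_\sigma}$ (without the $\mG^\top$) is genuinely false whenever some principal submatrix of $\mM$ is singular of corank one, e.g.\ when two constraint gradients are parallel; it is exactly the factor $\mG^\top$, i.e.\ the fact that the barrier Jacobian only ever sees $\mM$ through $\mG^\top$, that makes those degenerate pieces invisible.
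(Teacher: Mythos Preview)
Your proposal is correct and follows essentially the same route as the paper: apply the multilinear expansion of \Cref{lem:split_into_adj} to $(\mG\mH^{-1}\mG^\top + \mLambda)^{-1}$ inside the formula from \Cref{lem:dueta_dx}, and then argue that the $\sigma\notin S$ contributions vanish once hit on the left by $\mG^\top$. The paper simply asserts this last step (``$\det([\mG\mH^{-1}\mG^\top]_\sigma)=0$ implying $\mG^\top\adj(\mG\mH^{-1}\mG^\top)_\sigma=0$''), whereas you actually supply the argument via the corank dichotomy and $\ker[\mM]_\sigma=\ker\mG_\sigma^\top$; your write-up is therefore more complete than the paper's on the one nontrivial point.
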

\begin{proof}
Applying \Cref{lem:split_into_adj} to $(GH^{-1}G^\top + \Lambda)$ in the expression for $\DuetaDx$ from \Cref{lem:dueta_dx} yields: 
\begin{align*}
    \DuetaDx &= \sum_{\sigma \in S} \frac{h_\sigma}{h} H^{-1}[\mF^\top - \mG^\top(\mG \mH^{-1}\mG^\top)_\sigma^{-1}(\mG \mH^{-1} \mF^\top - \mP)] \\
    &-\sum_{\sigma \in \{0,1\}^m\setminus S} \frac{c_\sigma}{h} \mH^{-1}\mG^\top \adj(\mG \mH^{-1} \mG^\top)_\sigma(\mG\mH^{-1} \mF^\top - \mP)\\
     &= \sum_{\sigma \in S} \frac{h_\sigma}{h} H^{-1}[\mF^\top - \mG^\top(\mG \mH^{-1}\mG^\top)_\sigma^{-1}(\mG \mH^{-1} \mF^\top - \mP)]. 
\end{align*}  
where, as defined in \Cref{lem:split_into_adj},  $h = \sum_{\sigma \in S} h_\sigma$, and $c_\sigma = \prod_{i=1}^m (\eta^{-1}\phi_i^2)^{1-\sigma_i}$.  
The second equality follows since for $\sigma \in \{0,1\}^m \setminus S$ and $\mH \succ 0$, by definition, $\det([GH^{-1}G^\top]_\sigma) = 0$, implying $G^\top \adj(G H^{-1}G^\top)_\sigma = 0$. Finally, plugging in  $K_\sigma$ from \Cref{eq:K_sigma}, one may finish the proof.  
\end{proof}
The above result immediately implies the following upper bound on  $\left\|\DuetaDx\right\|$, independent of $\eta$. 
\begin{corollary}\label{cor:first_der_bound_ueta}
In the setting of \Cref{thm:convex_combination}, we have 
\begin{align*}
    \left\|\DuetaDx\right\| \leq L := \max_{\sigma \in S} \|K_\sigma\|. 
\end{align*}
\end{corollary}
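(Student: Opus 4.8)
The plan is to read off the bound directly from the convex-combination identity in Lemma 3.5. By that lemma, $\DuetaDx = \frac{1}{\sum_{\sigma \in S} h_\sigma} \sum_{\sigma \in S} h_\sigma K_\sigma$, which exhibits $\DuetaDx$ as a convex combination of the matrices $\{K_\sigma\}_{\sigma \in S}$: the coefficients $h_\sigma / \sum_{\sigma' \in S} h_{\sigma'}$ are manifestly nonnegative and sum to $1$. So the first step is simply to observe that we have a genuine convex combination, which requires checking that $h_\sigma \geq 0$ for every $\sigma \in S$ and that the normalizing sum is strictly positive. The former holds because $h_\sigma = \det([G H^{-1} G^\top]_\sigma) \prod_{i=1}^m (\eta^{-1} \phi_i^2)^{1-\sigma_i}$, where $\det([G H^{-1} G^\top]_\sigma) > 0$ by the definition of $S$, $\eta > 0$, and each residual $\phi_i = \phi_i(x_0, \vueta(x_0)) > 0$ since $\vueta(x_0)$ lies in the interior of the constraint polytope (the log-barrier forces strict feasibility); the latter holds because $S$ is nonempty — e.g. $\sigma = \mathbf{0}$ gives $\det([G H^{-1} G^\top]_{\mathbf{0}}) = 1 > 0$ by the usual convention for the empty principal submatrix — so at least one strictly positive term appears.

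The second step is to apply the triangle inequality together with the operator-norm homogeneity: for any convex weights $\lambda_\sigma \geq 0$ with $\sum_\sigma \lambda_\sigma = 1$,
\[
\left\| \sum_{\sigma \in S} \lambda_\sigma K_\sigma \right\| \leq \sum_{\sigma \in S} \lambda_\sigma \|K_\sigma\| \leq \left( \sum_{\sigma \in S} \lambda_\sigma \right) \max_{\sigma \in S} \|K_\sigma\| = \max_{\sigma \in S} \|K_\sigma\|.
\]
Taking $\lambda_\sigma = h_\sigma / \sum_{\sigma' \in S} h_{\sigma'}$ yields $\left\| \DuetaDx \right\| \leq \max_{\sigma \in S} \|K_\sigma\| =: L$, and since $S$ is finite, the maximum is attained and $L$ is a finite constant. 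Crucially, the $h_\sigma$ (and hence the $\eta$-dependence) cancel in the normalized combination, so the bound $L$ depends only on the problem data through the finitely many feedback-gain matrices $K_\sigma$ and not on $\eta$ — this is what makes the bound uniform, as claimed.

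I do not anticipate a genuine obstacle here: the corollary is an essentially immediate consequence of Lemma 3.5, and the only points requiring a word of justification are the sign/positivity facts needed to certify that the coefficients form a convex combination (strict feasibility of $\vueta(x_0)$, positivity of the relevant determinants, non-emptiness of $S$). If anything needs care, it is making sure the operator norm being used is sub-multiplicative/convex in the required sense, which is true for any matrix norm induced by a vector norm (in particular the spectral norm used throughout the paper), so the triangle inequality step is valid as stated.
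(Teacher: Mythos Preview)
Your proposal is correct and follows exactly the same approach as the paper: invoke \Cref{thm:convex_combination} to recognize $\DuetaDx$ as a convex combination of the $K_\sigma$, then bound its norm by the maximum over the finite set $S$. The paper's proof is a one-liner stating precisely this, whereas you additionally spell out the positivity of the weights and the non-emptiness and finiteness of $S$.
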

\begin{proof}
    From \Cref{thm:convex_combination}, we can conclude that $\frac{\partial \vueta}{\partial x_0}$  lies in the convex hull of $\{K_\sigma\}_{\sigma \in S}$, and note that $|S| < \infty$.
\end{proof}

The main export of this section, which shows that $\vueta$ (and hence $\pibmpc$) satisfies the conditions of \Cref{assumption:smoothness_of_exp_and_learned}, is the following theorem. This result hinges on  \Cref{thm:res_lower_bound}, which quantifies  lower bounds on residuals when minimizing a convex cost over a polytope, a result we hope could be of independent interest to the optimization community. 
\begin{theorem}\label{thm:hess_ueta_bounded}The Hessian of the solution $\vueta$ of \Cref{def:barr_mpc_formal} with respect to $x_0$ is bounded by:
\begin{align*}
    \left\|\frac{\partial^2 u^\eta}{\partial x_0^2}\right\| \leq \frac{C}{\philb}(\|P\| + \|G\|L)^2,
\end{align*} 
where $C := \max_{\sigma \in S} \|2H^{-1}G^\top (G H^{-1} G^\top)_\sigma^{-1}\|$ (with $S$ as in \Cref{thm:convex_combination}),  matrices $P$ and $G$ are as defined in \Cref{eq:reformulated}, $L$ as in \Cref{cor:first_der_bound_ueta}, and $\philb \geq \min\left\{ \frac{\eta}{2}, \frac{r\eta^2}{150 (\nu\eta^2 + R^2  (\lipschitzcost^2 +1) )} \right\}$, 
where $r$, $R$, and $L_V$ are the inner radius, outer radius, and Lipschitz constant of \Cref{eq:reformulated} as described in \Cref{sec:explicitMPC}, and $\nu=20(\nconstr+R^2\|d\|^2)$, where $\|d\|^2$ in \Cref{def:barr_mpc_formal} is a constant by construction. 
\end{theorem}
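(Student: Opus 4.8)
The plan is to differentiate the closed form for $\DuetaDx$ from \Cref{lem:dueta_dx} once more in $x_0$, simplify the result using an identity implicit in the proof of \Cref{thm:convex_combination}, and then bound the resulting prefactors using the uniform residual lower bound $\philb$. Write $\DuetaDx = \mH^{-1}\mF^\top - \mH^{-1}\mG^\top M^{-1}(\mG\mH^{-1}\mF^\top - \mP)$ with $M := \mG\mH^{-1}\mG^\top + \eta^{-1}\Phi^2$, where $\Phi = \Diag(\phi(x_0,\vueta(x_0)))$ carries the only $x_0$-dependence. Differentiating along coordinate $j$ (write $\partial_j$ for $\partial/\partial (x_0)_j$) and using $\partial_j M^{-1} = -M^{-1}(\partial_j M)M^{-1}$ gives
\[
\partial_j \DuetaDx = \mH^{-1}\mG^\top M^{-1}(\partial_j M)\,M^{-1}(\mG\mH^{-1}\mF^\top - \mP).
\]
By the chain rule through $\vueta(x_0)$ one has $\partial_j\phi = (\mP - \mG\DuetaDx)e_j$, so $\partial_j M = 2\eta^{-1}\Phi\,\Diag(\partial_j\phi)$. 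Left-multiplying the formula for $\DuetaDx$ by $\mG$ and using $\mG\mH^{-1}\mG^\top = M - \eta^{-1}\Phi^2$ yields the identity $\eta^{-1}\Phi^2 M^{-1}(\mG\mH^{-1}\mF^\top - \mP) = \mG\DuetaDx - \mP$; substituting it, and commuting the diagonal matrices, collapses the slice to
\[
\partial_j \DuetaDx = -2\,\mH^{-1}\mG^\top M^{-1}\,\Diag(\partial_j\phi)\,\Phi^{-1}\,(\mP - \mG\DuetaDx).
\]

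Next I would invoke the decomposition behind \Cref{thm:convex_combination} (via \Cref{lem:split_into_adj}): since the adjugate terms indexed by $\sigma\notin S$ vanish once $\mG^\top$ is applied on their left, $\mH^{-1}\mG^\top M^{-1} = \sum_{\sigma\in S}\tfrac{h_\sigma}{h}\,\mH^{-1}\mG^\top(\mG\mH^{-1}\mG^\top)_\sigma^{-1}$, a convex combination of matrices of norm at most $C/2$. Taking norms and combining (i) $\|\mH^{-1}\mG^\top(\mG\mH^{-1}\mG^\top)_\sigma^{-1}\| \le C/2$; (ii) $\|\Diag(\partial_j\phi)\Phi^{-1}\| = \max_i |\partial_j\phi_i|/\phi_i \le (\|\mP\|+\|\mG\|L)/\philb$, using $\|\partial_j\phi\| \le \|\mP - \mG\DuetaDx\| \le \|\mP\| + \|\mG\|L$ by \Cref{cor:first_der_bound_ueta} and $\phi_i(x_0,\vueta(x_0)) \ge \philb$; and (iii) $\|\mP - \mG\DuetaDx\| \le \|\mP\| + \|\mG\|L$ once more, yields $\bigl\|\partial^2 u^\eta/\partial x_0^2\bigr\| \le \tfrac{C}{\philb}(\|\mP\| + \|\mG\|L)^2$. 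The same bound holds along an arbitrary unit direction $v$ in place of $e_j$, since $\sum_j v_j \Diag(\partial_j\phi) = \Diag((\mP - \mG\DuetaDx)v)$; note the factor $2$ in the definition of $C$ is exactly the $2$ produced by differentiating $\Phi^2$.

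The remaining — and principal — obstacle is the uniform residual lower bound $\phi_i(x_0,\vueta(x_0)) \ge \philb$ with the explicit constant claimed: as $\eta\to 0$ the barrier minimizer $\vueta$ tends to $\vu^\star$, which may sit on the boundary, so the residuals do vanish in the limit and one must show they shrink no faster than $\Theta(\eta^2)$. This is exactly the content of \Cref{thm:res_lower_bound}; I would apply it to the recentered self-concordant barrier of \Cref{lem:linear_plus_barrier_sc} with parameter $\nu = 20(\nconstr + R^2\|d\|^2)$ and the Lipschitz data $r,R,\lipschitzcost$ of \Cref{sec:explicitMPC}, obtaining $\philb \ge \min\{\eta/2,\ r\eta^2/(150(\nu\eta^2 + R^2(\lipschitzcost^2+1)))\}$. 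A minor technical point to verify carefully is that the $\sigma\notin S$ adjugate terms above really do vanish as matrices after left-multiplication by $\mG^\top$ — not merely after contracting with $\mG\mH^{-1}\mF^\top - \mP$ — so that the constant $C/2$ can legitimately be pulled out in front of the residual-dependent factors.
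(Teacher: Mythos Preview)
Your proposal is correct and follows essentially the same route as the paper: differentiate the Jacobian formula from \Cref{lem:dueta_dx}, bound the prefactor $\mH^{-1}\mG^\top M^{-1}$ via the convex-combination decomposition of \Cref{lem:split_into_adj}/\Cref{thm:convex_combination}, absorb the remaining $M^{-1}$ into a factor of size $1/\min_i\phi_i$, and close with \Cref{thm:res_lower_bound} applied to the recentered barrier of \Cref{lem:linear_plus_barrier_sc}. The only cosmetic difference is how you handle the second $M^{-1}$: the paper factors it algebraically as $(\eta\mG\mH^{-1}\mG^\top\Phi^{-1}+\Phi)^{-1}$ and bounds its norm by $1/\min_i\phi_i$ directly, whereas you derive the identity $\eta^{-1}\Phi^2 M^{-1}(\mG\mH^{-1}\mF^\top-\mP)=\mG\DuetaDx-\mP$ and invoke \Cref{cor:first_der_bound_ueta}; both yield the same $(\|\mP\|+\|\mG\|L)$ factor, and your lingering concern about the $\sigma\notin S$ adjugate terms is exactly what the second equality in the proof of \Cref{thm:convex_combination} settles (since $\ker([\mG\mH^{-1}\mG^\top]_\sigma)=\ker(\mG_\sigma^\top)$ when $\mH\succ0$).
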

\begin{proof}
Let $y \in \R^{d_x}$ be an arbitrary unit-norm vector, and define the univariate function $M(t) := GH^{-1}G^\top + \eta^{-1}\Phi(x_0 + ty, u_\eta(x_0 + ty))^2$ where $u_\eta$ is the solution to \Cref{def:barr_mpc_formal} and  $\Phi := \Diag(\phi)$, with $\phi$ as  in \Cref{eq:phi}. Then by differentiating $M(t)^{-1}$ and applying the chain rule, we get  
\begin{align*}
    \frac{d}{dt}\left(\frac{\partial u^\eta}{\partial x_0}(x_0 + t y)\right) 
    = &\,H^{-1}G^\top M(t)^{-1}\frac{dM(t)}{dt} M(t)^{-1}(GH^{-1} F^\top - P) \\
    = &\, 2H^{-1}G^\top M(t)^{-1}  \frac{d\Phi}{dt} (\eta G H^{-1}G^\top\Phi^{-1} + \Phi)^{-1}(GH^{-1} F^\top - P).
\end{align*}
Applying \Cref{lem:split_into_adj} to $M(t)^{-1}$ implies, $$\|2H^{-1}G^\top M(t)^{-1}\| < C := \max_{\sigma \in S} \|2H^{-1}G^\top (G H^{-1} G^\top)_\sigma^\dagger\|.$$
To bound the other terms in the product, we first note that by arguments about the norm, $\|(\eta GH^{-1} G^\top \Phi^{-1} + \Phi)^{-1}\| \leq \frac{1}{\min_{i \in [\nconstr]} \phi_i}$. Next, the definition of $\Phi$, triangle inequality, and \Cref{thm:convex_combination} give 
$\left\|\frac{d\Phi}{d t}\right\| \leq \|P\| + \|G\| \left\|\frac{\partial u_\eta}{\partial x_0}\right\| \leq \|P\| + \|G\| L$. Finally, recognizing $H^{-1} F^\top$ as $K_{\sigma}$ from \Cref{eq:K_sigma} (with $\sigma = 0 \in \R^{\nconstr}$) yields $ \|G H^{-1}F^\top - P\| = \left\|G K_{0} - P\right\| \leq \|P\| + \|G\|L$. Finally, we combine these   with the lower bound on $\philb$ from \Cref{thm:res_lower_bound} that uses $\nu = 20(m+R^2\|d\|^2)$, the self-concordance parameter (computed via \Cref{lem:linear_plus_barrier_sc}) of the recentered log-barrier in \Cref{def:barr_mpc_formal}. 
\end{proof}
Thus, \Cref{thm:hess_ueta_bounded} establishes bounds analogous to those  in \Cref{thm:randomized_bounds} for randomized smoothing, demonstrating that the Jacobian of the smoothed expert policy is sufficiently Lipschitz. Indeed, in this case our result is stronger, showing that the Jacobian is differentiable and  the Hessian tensor is bounded. This theoretically validates the  core proposition of our paper:  the barrier MPC policy in \Cref{def:barr_mpc_formal} is suitably smooth, and therefore the  guarantees in  \Cref{sec:learning_guarantees} hold.
Having established our theoretical guarantees, we now turn to demonstrating their efficacy in our experiments.

\begin{figure*}
    \centering
    \includegraphics[width=0.95\linewidth]{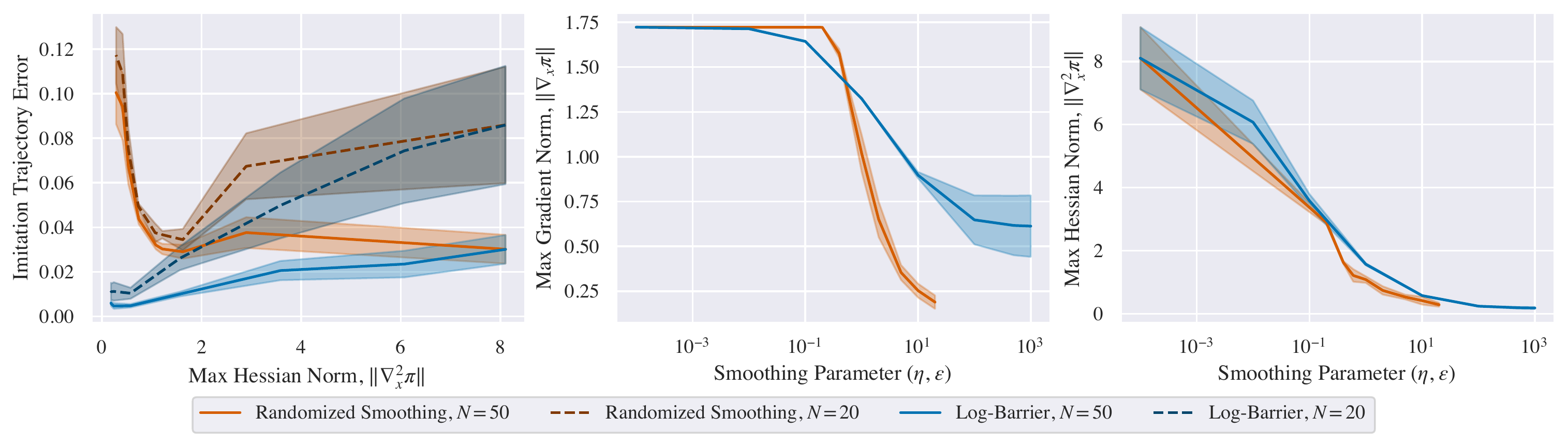}
    \caption{Left: The imitation error $\max_{t} \|\hat{x} - x^\star\|$ for the trained MLP over 5 seeds, as a function of the expert smoothness for both randomized smoothing and log-barrier MPC. Center, Right: The $L_0$ (gradient norm) and $L_1$ (Hessian norm) smoothness of $\pi^\star$ as a function of the smoothing parameter.}
    \label{fig:experiments}
\end{figure*}
\section{Experiments}
We demonstrate the advantage of barrier MPC over randomized smoothing for the toy double integrator system visualized in \Cref{fig:explicit_mpc}. We sample $N \in [20, 50]$ trajectories of length $K = 20$ using $\pibmpc$ and $\pirs$ with a horizon length $T = 10$ and smoothing parameters $\eta$ and $\epsilon$ ranging from $10^{-4}$ to $10^3$ and $10^{-4}$ to $20$, respectively. We use $\mathcal{P} = \mathcal{N}(0,I)$ for the randomized smoothing distribution. For each parameter set, we trained a 4-layer multi-layer perceptron (MLP) using GELU activations \cite{hendrycks2016gaussian} to ensure smoothness of $\Pi$.

In \Cref{fig:experiments}, we visualize the smoothness properties of the chosen $\pi^*$ for each method across the choices of $\eta, \epsilon$. We also show the imitation error $\max_t \|x^\star - \hat{x}\|$ for the learned MLP as a function of the expert smoothness. For more smooth experts, we observe that barrier MPC outperforms randomized smoothing, even in the lower-data setting. These experiments confirm our hypothesis:  barrier MPC is an effective smoothing technique (preserving both stability and constraints) that outperforms randomized smoothing.

\section*{Acknowledgements}
We gratefully acknowledge funding from  ONR N00014-23-1-2299 and a Vannevar Bush Fellowship from the Office of Undersecretary of Defense for Research and Engineering (OUSDR\&E).

\clearpage

\printbibliography

\newpage
\begin{appendix}
\section{Notation and Preliminaries}\label{sec:app-matrix-formulas} 
We use uppercase letters for matrices and lowercase letters for vectors. 
We use $\ve_i$ to denote the vector with one at the $i^\mathrm{th}$ coordinate and zeroes at the remaining coordinates. 
We collect the following relevant facts from matrix analysis. 

\begin{fact}[\cite{horn2012matrix}]\label{fact:block-matrix-determinant}
Consider a matrix $\mA = \begin{bmatrix} 
\va & \vb^\top \\ \vb &  \mD\end{bmatrix}.$ Then $\adj(\mA)$ is defined to be the matrix that satisfies $\adj(\mA) \cdot \mA = \mA\cdot\adj(\mA) = \det(\mA) \cdot \mI$ and  equals the transpose of the cofactor matrix. 
The matrix determinant lemma lets us  express, for any $\mM$, the determinant for a unit-rank update: \[ \det(\mM + \vu\vv^\top) = \det(\mM) + \vv^\top\adj(\mM) \vu\numberthis\label{eq:matDetLem}.\] Notably, \Cref{eq:matDetLem} does \emph{not} require invertibility of $\mD$.
Applying \Cref{eq:matDetLem} to $\mA$ defined above gives: 
\begin{align*}\det(\mA) &= \det\left(\begin{bmatrix}\va & \vb^\top \\ 0 & \mD\end{bmatrix}  + \begin{bmatrix}0 \\ b\end{bmatrix}\ve_1^\top\right) \\
&= \det\left(\begin{bmatrix}a & b^\top \\ 0 & D\end{bmatrix}\right) + \begin{bmatrix}0 & \vb^\top\end{bmatrix} \adj(A) \ve_1\\
&= a \cdot\det(D) - \vb^\top \adj(D) \vb,\numberthis\label{eq:detSchur}
\end{align*} where  the final step is by \Cref{lem:adj_A}.
\end{fact}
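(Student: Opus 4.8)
The plan is to derive \Cref{eq:detSchur} from the matrix determinant lemma \Cref{eq:matDetLem} together with a single auxiliary identity for the adjugate of a block-triangular matrix, establishing both by first checking them when the relevant matrix is invertible and then extending by a polynomial-continuity argument; this extension is precisely what removes the need for $\mD$ to be invertible. For the matrix determinant lemma itself: when $\mM$ is invertible, $\det(\mM + \vu\vv^\top) = \det(\mM)\det(\mI + \mM^{-1}\vu\vv^\top) = \det(\mM)(1 + \vv^\top\mM^{-1}\vu)$, and distributing $\det(\mM)$ and using $\adj(\mM) = \det(\mM)\mM^{-1}$ gives $\det(\mM) + \vv^\top\adj(\mM)\vu$. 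For fixed $\vu,\vv$ both sides are polynomials in the entries of $\mM$ agreeing on the dense set of invertible matrices --- equivalently, apply the identity to $\mM + t\mI$, which is invertible for all but finitely many $t$, and let $t \to 0$ --- so it holds for every square $\mM$.

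Next I would write $\mA = \mA_0 + \vu\,\ve_1^\top$ with $\mA_0 := \begin{bmatrix}\va & \vb^\top \\ 0 & \mD\end{bmatrix}$ and $\vu := \begin{bmatrix}0 \\ \vb\end{bmatrix}$, so the rank-one update simply restores $\vb$ in the lower-left block, and apply \Cref{eq:matDetLem} to get $\det(\mA) = \det(\mA_0) + \ve_1^\top\adj(\mA_0)\vu$. Block-triangularity (or a cofactor expansion down the first column) gives $\det(\mA_0) = \va\cdot\det(\mD)$. The remaining term $\ve_1^\top\adj(\mA_0)\vu$ is the first row of $\adj(\mA_0)$ paired with $\vu$, so it suffices to know that row; I would record the identity $\adj(\mA_0) = \begin{bmatrix}\det(\mD) & -\vb^\top\adj(\mD) \\ 0 & \va\,\adj(\mD)\end{bmatrix}$ (a small auxiliary lemma --- the one invoked in the statement), proved by the same recipe: for invertible $\mD$ it is immediate from $\adj(\mA_0) = \det(\mA_0)\mA_0^{-1} = \va\det(\mD)\begin{bmatrix}1/\va & -\vb^\top\mD^{-1}/\va \\ 0 & \mD^{-1}\end{bmatrix}$ together with $\det(\mD)\mD^{-1} = \adj(\mD)$, and both sides are polynomial in the entries of $\mD$. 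Reading off the first row (the $\det(\mD)$ entry multiplies the zero first component of $\vu$) gives $\ve_1^\top\adj(\mA_0)\vu = -\vb^\top\adj(\mD)\vb$, hence $\det(\mA) = \va\det(\mD) - \vb^\top\adj(\mD)\vb$, which is \Cref{eq:detSchur}.

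I do not expect a genuine obstacle: the only point requiring care is the assertion that no invertibility of $\mD$ is needed, which is exactly why I route both the determinant lemma and the block-triangular adjugate identity through the invertible case and then invoke density of the invertible matrices (equivalently, equality of two polynomials), rather than a direct cofactor computation --- the latter is doable but sign-bookkeeping heavy. As an alternative to the adjugate identity one could cofactor-expand $\det(\mA)$ along its first row directly and recognize the resulting combination $\sum_j \vb_j(-1)^{1+j}(\text{minor of }\mA)$ as a Cramer-rule expression equal to $-\vb^\top\adj(\mD)\vb$; but isolating the adjugate identity as a lemma is cleaner and reusable.
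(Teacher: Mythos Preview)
Your proposal is correct and follows essentially the same route as the paper: write $\mA$ as a rank-one update of the block upper-triangular $\mA_0$, apply the matrix determinant lemma, and extract the needed row of the adjugate. The only cosmetic difference is that you compute $\adj(\mA_0)$ directly, whereas the paper invokes \Cref{lem:adj_A} (stated for the full matrix $\mA$); your density/polynomial-continuity argument also makes explicit why invertibility of $\mD$ is unnecessary, which the paper leaves implicit.
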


\begin{fact}[Woodbury matrix identity]\label{fact:shermanMorrisonWoodbury}
    Given conformable matrices $\mA, \mC, \mU, $ and $\mV$ such that $\mA$ and $\mC$ are invertible, 
\[ (\mA + \mU \mC \mV)^{-1} = \mA^{-1} - \mA^{-1} \mU (\mC^{-1} + \mV \mA^{-1} \mU ) \mV \mA^{-1}.\]
\end{fact}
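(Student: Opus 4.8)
The plan is to prove the identity by direct verification. Write the claimed right-hand side as
\[
X := \mA^{-1} - \mA^{-1}\mU(\mC^{-1} + \mV\mA^{-1}\mU)^{-1}\mV\mA^{-1},
\]
and check that $(\mA + \mU\mC\mV)\,X = \mI$. Since $\mA + \mU\mC\mV$ is a square matrix, exhibiting a one-sided inverse is enough to conclude it is invertible with inverse $X$. The one preliminary point to settle is that the inner matrix $\mC^{-1} + \mV\mA^{-1}\mU$ is itself invertible, so that $X$ is well-defined; this is exactly where the ``conformable'' hypothesis and the invertibility of $\mA$ and $\mC$ enter, and it can be seen, e.g., by computing the Schur complement of the block matrix $\begin{bmatrix}\mA & -\mU \\ \mV & \mC^{-1}\end{bmatrix}$ in two ways.

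First I would expand the product using $\mA\mA^{-1} = \mI$:
\[
(\mA + \mU\mC\mV)\,X = \mI - \mU(\mC^{-1} + \mV\mA^{-1}\mU)^{-1}\mV\mA^{-1} + \mU\mC\mV\mA^{-1} - \mU\mC\mV\mA^{-1}\mU(\mC^{-1} + \mV\mA^{-1}\mU)^{-1}\mV\mA^{-1}.
\]
Keeping the non-commutative order intact, I would factor $\mU$ out on the left and $(\mC^{-1} + \mV\mA^{-1}\mU)^{-1}\mV\mA^{-1}$ out on the right of the first and last terms, obtaining
\[
(\mA + \mU\mC\mV)\,X = \mI + \mU\mC\mV\mA^{-1} - \mU\bigl(\mI + \mC\mV\mA^{-1}\mU\bigr)(\mC^{-1} + \mV\mA^{-1}\mU)^{-1}\mV\mA^{-1}.
\]
The key algebraic step is the factorization $\mI + \mC\mV\mA^{-1}\mU = \mC\,(\mC^{-1} + \mV\mA^{-1}\mU)$, which is valid because $\mC$ is invertible. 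Substituting this into the previous line, the factor $(\mC^{-1} + \mV\mA^{-1}\mU)$ cancels against its inverse, leaving the middle term equal to $\mU\mC\mV\mA^{-1}$, which exactly cancels the $+\,\mU\mC\mV\mA^{-1}$ summand and yields $(\mA + \mU\mC\mV)\,X = \mI$.

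I do not anticipate a real obstacle: the argument is pure bookkeeping, and the only thing that requires care is never reordering factors and justifying that $\mC^{-1} + \mV\mA^{-1}\mU$ is invertible so the statement is meaningful in the first place. If one prefers a fully symmetric presentation, the mirror-image computation verifies $X\,(\mA + \mU\mC\mV) = \mI$ as well; alternatively, since everything is finite-dimensional and square, the one-sided check already suffices. (Note that the displayed statement should carry an inverse on the inner factor $\mC^{-1} + \mV\mA^{-1}\mU$; the proof above establishes that corrected identity, which is the one used via \Cref{fact:shermanMorrisonWoodbury} in \Cref{lem:dueta_dx}.)
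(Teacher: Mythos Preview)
The paper does not supply its own proof of this fact; it is stated as a standard identity and used as a black box in \Cref{lem:dueta_dx}. Your direct-verification argument is correct and is the canonical way to establish the Woodbury identity, so there is nothing to compare against and nothing to fix in your reasoning. You also correctly flag that the displayed formula in the paper is missing the inverse on the inner factor $\mC^{-1} + \mV\mA^{-1}\mU$; the identity only makes sense (and matches its use in \Cref{lem:dueta_dx}) with that inverse present.
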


\subsection{Our Matrix Lemmas}

We first state some technical results that we build upon to prove our first key result (\Cref{lem:split_into_adj}).

\begin{lemma}\label{lem:adj_A}
Consider matrix $\mA$ as in \Cref{fact:block-matrix-determinant}. Then
\[\adj (\mA) = \begin{bmatrix}
    \det(\mD) & -  \vb^\top \adj(\mD) \\
    -\adj(\mD) \vb & a\cdot \adj(\mD) + \mK
\end{bmatrix},\]
where $\mK$ satisfies $DK = \adj(\mD)\vb\vb^\top  - \vb^\top \adj(\mD)\vb \cdot \mI$.
\end{lemma}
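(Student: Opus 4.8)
The plan is to compute the adjugate of $\mA = \begin{bmatrix}\va & \vb^\top \\ \vb & \mD\end{bmatrix}$ entrywise, by interpreting each cofactor as a determinant of a submatrix and applying the matrix determinant lemma (\Cref{eq:matDetLem}) together with the Schur-complement determinant formula \Cref{eq:detSchur} already derived in \Cref{fact:block-matrix-determinant}. Recall that the $(i,j)$ entry of $\adj(\mA)$ is $(-1)^{i+j}$ times the $(j,i)$ minor of $\mA$ (i.e.\ the determinant of $\mA$ with row $j$ and column $i$ deleted). I would organize the four blocks of $\adj(\mA)$ according to the $1 + (n-1)$ partition matching that of $\mA$.

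First I would handle the top-left scalar entry: deleting the first row and column of $\mA$ leaves $\mD$, so $[\adj(\mA)]_{11} = \det(\mD)$. Next, for the top-right block (a row vector of length $n-1$) and the bottom-left block (a column vector), the relevant minors are determinants of matrices obtained by deleting the first row (resp.\ column) together with one of the remaining columns (resp.\ rows); each such determinant can be expanded along the column containing $\va$ — which then contributes nothing off the first coordinate — so these reduce to $\pm$ the entries of $\adj(\mD)\vb$, giving the claimed $-\vb^\top\adj(\mD)$ and $-\adj(\mD)\vb$. For the bottom-right $(n-1)\times(n-1)$ block, the $(i,j)$ cofactor is (up to sign) the determinant of a matrix of the same block form as $\mA$ but with one row and one column removed from $\mD$; I would apply \Cref{eq:detSchur} (or directly \Cref{eq:matDetLem}) to write this as $\va$ times a minor of $\mD$ minus a bilinear form in $\vb$ and a minor of $\mD$. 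Collecting these over all $i,j$ produces $\va\cdot\adj(\mD)$ plus a correction matrix $\mK$ built from $\vb$, $\vb^\top$, and the cofactors of $\mD$.

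The remaining task — and the main obstacle — is to identify that correction matrix $\mK$ cleanly, i.e.\ to show it is characterized by $\mD\mK = \adj(\mD)\vb\vb^\top - (\vb^\top\adj(\mD)\vb)\,\mI$ rather than having to write it out in closed form (which would be awkward and is unnecessary, and indeed $\mD$ need not be invertible). The cleanest route is to avoid the entrywise bookkeeping for this block entirely and instead invoke the defining identity $\adj(\mA)\,\mA = \det(\mA)\,\mI$. Writing $\adj(\mA)$ in the block form with unknown bottom-right block $\va\cdot\adj(\mD) + \mK$ and multiplying out against $\mA$, the block equations coming from the off-diagonal and diagonal blocks, combined with $\adj(\mD)\mD = \det(\mD)\mI$ and the determinant formula \Cref{eq:detSchur} for $\det(\mA)$, force exactly the stated relation $\mD\mK = \adj(\mD)\vb\vb^\top - (\vb^\top\adj(\mD)\vb)\mI$; one checks that this is consistent (the right-hand side lies in the appropriate range) and that it pins down $\mK$ on the relevant subspace. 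This bypasses any case analysis on whether $\mD$ is singular and makes the verification a short matrix computation.
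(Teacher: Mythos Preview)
Your approach is correct and largely mirrors the paper's: both compute the four blocks of $\adj(\mA)$ via the cofactor definition, applying \Cref{eq:detSchur} to the bottom-right cofactors to extract the $a\cdot\adj(\mD)$ piece and leave a remainder $\mK$. The paper's proof in fact stops there --- it simply \emph{defines} $\mK$ entrywise as the collection of terms $-(-1)^{i+j} v_i^\top \adj(M_{ij}) v_j$ and says ``one can then conclude the proof,'' without explicitly verifying the stated relation for $\mD\mK$. Your idea to derive that relation from the $(2,2)$ block of $\adj(\mA)\,\mA = \det(\mA)\,\mI$ is a clean completion that the paper omits; it sidesteps the entrywise bookkeeping and needs no invertibility assumption on $\mD$.

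One small caution: the $(2,2)$ block of $\adj(\mA)\,\mA = \det(\mA)\,\mI$ actually yields $\mK\mD = \adj(\mD)\vb\vb^\top - (\vb^\top\adj(\mD)\vb)\,\mI$, while $\mA\,\adj(\mA) = \det(\mA)\,\mI$ yields $\mD\mK = \vb\vb^\top\adj(\mD) - (\vb^\top\adj(\mD)\vb)\,\mI$. Neither matches the lemma's stated $\mD\mK = \adj(\mD)\vb\vb^\top - \cdots$ on the nose unless $\mD$ (hence $\adj(\mD)$ and $\mK$) is symmetric --- which is the case in every application the paper makes of this lemma, but is not assumed in its statement. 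So be precise about which identity you multiply out; your method will recover the correct relation, possibly exposing a minor transposition in the lemma as written.
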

\begin{proof}We use that $\adj(A) = \mC^\top$, where $\mC$ is the matrix of cofactors of $\mA$. Let $M_{i,j}$ be the $\{i,j\}^{\mathrm{th}}$ minor of $D$, $M_{j}$ be $D$ with the $j$th column removed, and $v_i$ be $b$ with the $i^\mathrm{th}$ index removed. Then, computing the relevant cofactors gives
$C_{1,1} = \det(D)$,  $C_{1, \space 1+j} = (-1)^{j}\det\left(\begin{bmatrix}
    b & M_j
    \end{bmatrix}\right)= -[b^\top \adj(D)]_j$, and 
\newcommand{\remindex}[2]{#1_{\backslash \{#2\}}}
\begin{align*}
    C_{1 + i, \space 1 + j} &= (-1)^{i + j}\det\left(\begin{bmatrix}
        a & v_j^\top \\
        v_i & M_{ij}
    \end{bmatrix}\right) \\
    &= (-1)^{i + j}a \cdot \det(M_{ij}) - (-1)^{i + j}v_i^\top \adj(M_{ij}) v_j \\
    &= a\cdot \adj(D)_{ij} - (-1)^{i + j}v_i^\top \adj(M_{ij}) v_j.
\end{align*} By mapping these cofactors back into the definition of the adjugate we want, one can  then conclude the proof, where $K$ collects the $- (-1)^{i + j}v_i^\top \adj(M_{ij})v_j$ terms. 
\end{proof}

\begin{fact}[Theorem $2.3$ of \cite{ipsen2008perturbation}]\label{lem:det_A_Lambda}
Given $\mA \in \R^{\nconstr \times \nconstr}$
as in \Cref{fact:block-matrix-determinant}, positive diagonal matrix $\Lambda=\Diag(\lambda) \in \R^{\nconstr \times \nconstr}$, and $A_\sigma$  denoting the principal submatrix formed by selecting $A$'s rows and columns indexed by $\sigma\in \{0, 1\}^{\nconstr}$, we have
\[\det(\mA + \mLambda) = \sum_{\sigma \in \{0,1\}^n}\left(\prod_{i=1}^m \lambda_i^{1 - \sigma_i}\right)\det(\mA_\sigma).\] 
\end{fact}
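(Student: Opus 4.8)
The plan is to prove the identity purely algebraically, by exploiting the multilinearity of the determinant in the columns; positivity of $\mLambda$ plays no role. Write the $i$-th column of $\mA+\mLambda$ as $\va_i + \lambda_i \ve_i$, where $\va_i$ is the $i$-th column of $\mA$ and $\ve_i$ is the $i$-th standard basis vector. Expanding the determinant one column at a time gives a sum of $2^m$ terms indexed by the subsets $T\subseteq[m]$: in the term for $T$, column $i$ is taken to be $\lambda_i\ve_i$ for $i\in T$ and $\va_i$ for $i\notin T$. Factoring the scalars $\lambda_i$, $i\in T$, out of their respective columns yields $\det(\mA+\mLambda) = \sum_{T\subseteq[m]}\big(\prod_{i\in T}\lambda_i\big)\det(N_T)$, where $N_T$ is the matrix whose $i$-th column equals $\ve_i$ for $i\in T$ and $\va_i$ otherwise.

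The second step is to identify $\det(N_T)$ with $\det(\mA_\sigma)$, where $\sigma\in\{0,1\}^m$ is the indicator vector of $[m]\setminus T$. Simultaneously permuting the rows and columns of $N_T$ so that the indices in $T$ come first is a conjugation by a permutation matrix and hence preserves the determinant; after this reordering $N_T$ is block upper triangular, with the $T$-block equal to the identity $\mI_{|T|}$ (because the column $\ve_i$, $i\in T$, has a single nonzero entry, at the row indexed by $i$), the lower-left block equal to zero (same reason), and the lower-right block equal to the principal submatrix $\mA_{[m]\setminus T}=\mA_\sigma$. Therefore $\det(N_T)=\det(\mI_{|T|})\det(\mA_\sigma)=\det(\mA_\sigma)$. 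Substituting this back and re-indexing the sum by $\sigma$ --- so that $T=\{i:\sigma_i=0\}$ and $\prod_{i\in T}\lambda_i=\prod_{i=1}^m\lambda_i^{1-\sigma_i}$ --- produces exactly the claimed expansion; the extreme term $\sigma=\mathbf{0}$ (i.e.\ $T=[m]$) contributes $\det(\mA_{\mathbf{0}})=1$ under the usual empty-determinant convention, while $\sigma=\mathbf{1}$ recovers $\det(\mA)$.

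The only part that needs a little care is the block-triangular reduction of $N_T$: one must check that, in the permuted matrix, every row indexed by an element of $[m]\setminus T$ vanishes on all columns indexed by $T$, so that the lower-left block is genuinely zero and the determinant splits as a product. Equivalently, the same fact can be obtained by induction on $|T|$, performing one Laplace expansion along a column equal to some $\ve_i$ at each stage, the cofactor sign being $(-1)^{i+i}=+1$ at that stage. I expect this routine sign and permutation bookkeeping, rather than any conceptual difficulty, to be the main (and only) obstacle.
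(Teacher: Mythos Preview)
Your proof is correct. The multilinearity expansion in the columns, followed by the block-triangular identification $\det(N_T)=\det(\mA_\sigma)$, is a clean and self-contained argument; the permutation/Laplace bookkeeping you flag is indeed routine.

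The paper takes a different route. It peels off the diagonal perturbations one at a time via the rank-one matrix determinant lemma in the adjugate form, $\det(M+\lambda_1 e_1 e_1^\top)=\det(M)+\lambda_1\, e_1^\top \adj(M)\, e_1$, and uses \Cref{lem:adj_A} to recognize $e_1^\top \adj(M)\, e_1$ as the determinant of the trailing principal submatrix. Recursing in $\lambda_1,\dots,\lambda_m$ produces the same $2^m$-term expansion. The paper's approach dovetails with the adjugate machinery it already needs for \Cref{lem:split_into_adj}, whereas your argument is more elementary in that it uses only multilinearity and a block-triangular evaluation, with no appeal to the adjugate or the matrix determinant lemma. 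Both are short; yours is slightly more direct, while theirs reuses lemmas they prove anyway.
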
 

\begin{proof} 
By applying \Cref{eq:matDetLem} and \Cref{lem:adj_A}, we have
\begin{align*} 
\det\left(\mA + \sum_{i=2}^m \lambda_i \ve_i \ve_i^\top + \lambda_1 \cdot \ve_1 \ve_1^\top\right)  
= \; &\det\left(\mA + \sum_{i=2}^m \lambda_i \ve_i \ve_i^\top\right) + \lambda_1 \cdot \det\left(\mD + \sum_{i=1}^{m-1} \lambda_{i+1} \ve_i \ve_i^\top\right).
\end{align*}
The lemma follows by recursive application of \Cref{eq:detSchur} with respect to $\lambda_1, \ldots, \lambda_m$ and noting that the determinant is invariant to permuting both rows and columns.
\end{proof} 

\begin{lemma}\label{lem:split_into_adj} For a positive semi-definite matrix $\mA \in \R^{\nconstr \times \nconstr}$ and a diagonal positive matrix $\mLambda = \Diag(\lambda)$, we have 
\begin{align*}
    (\mA + \mLambda)^{-1} &=  \sum_{\substack{\sigma \in \{0,1\}^m,\\ \det(\mA_\sigma) \neq 0}} \frac{h_\sigma}{h}(\mA_\sigma)^{-1}
    + \sum_{\substack{\sigma \in \{0,1\}^m,\\ \det(\mA_\sigma) = 0}} \left(\frac{\prod_{i=1}^m\lambda_i^{1 - \sigma_i}}{h}\right)\adj(\mA_\sigma) ,
\end{align*}
where $h_\sigma = \det(\mA_\sigma)\prod_{i=1}^m \lambda_i^{1-\sigma_i}$ and $h = \sum_{\sigma \in \{0,1\}^m} h_\sigma$.
\end{lemma}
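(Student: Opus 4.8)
The plan is to obtain the claimed decomposition of $(\mA + \mLambda)^{-1}$ from the cofactor formula $(\mA+\mLambda)^{-1} = \adj(\mA+\mLambda)/\det(\mA+\mLambda)$, reducing both numerator and denominator to sums over $\sigma \in \{0,1\}^m$ via the multilinear expansion in the diagonal perturbation. First I would handle the denominator: by \Cref{lem:det_A_Lambda}, $\det(\mA+\mLambda) = \sum_{\sigma \in \{0,1\}^m}\left(\prod_{i=1}^m \lambda_i^{1-\sigma_i}\right)\det(\mA_\sigma) = \sum_\sigma h_\sigma = h$, which already matches the normalization in the statement (and shows $h > 0$ when $\mLambda \succ 0$ and $\mA \succeq 0$, since the all-ones $\sigma$ contributes $\det(\mA) \geq 0$ and every other term is nonnegative, with at least the empty-$\sigma$ term $\prod_i \lambda_i$ strictly positive).

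Next I would expand the numerator $\adj(\mA+\mLambda)$ entrywise. The $(j,k)$ entry of $\adj(\mA+\mLambda)$ is $\pm$ the $(k,j)$ minor of $\mA+\mLambda$, i.e. a determinant of a submatrix of $\mA+\mLambda$ obtained by deleting row $k$ and column $j$; when $j = k$ this is itself a matrix of the form ``principal submatrix of $\mA$ plus diagonal,'' so \Cref{lem:det_A_Lambda} applies directly, and for $j \neq k$ one can still expand multilinearly in the remaining $\lambda_i$'s. Collecting terms by which subset $\sigma$ of the $\lambda_i$'s is ``selected'' (i.e. which $\lambda_i$ are differentiated/kept versus which rows stay inside $\mA$), the coefficient of $\prod_{i=1}^m \lambda_i^{1-\sigma_i}$ in $\adj(\mA+\mLambda)$ is exactly $\adj(\mA_\sigma)$, padded with zeros at the discarded coordinates in the manner fixed by the paper's notation. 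Dividing by $h$ gives
\[
(\mA+\mLambda)^{-1} = \sum_{\sigma \in \{0,1\}^m} \frac{\prod_{i=1}^m \lambda_i^{1-\sigma_i}}{h}\,\adj(\mA_\sigma).
\]
Finally I would split this single sum according to whether $\det(\mA_\sigma) \neq 0$: when $\det(\mA_\sigma) \neq 0$ we have $\adj(\mA_\sigma) = \det(\mA_\sigma)(\mA_\sigma)^{-1}$, so $\prod_i \lambda_i^{1-\sigma_i}\adj(\mA_\sigma) = h_\sigma (\mA_\sigma)^{-1}$, recovering the first sum in the statement; the terms with $\det(\mA_\sigma) = 0$ are left in adjugate form, giving the second sum.

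The main obstacle is making the entrywise claim about $\adj(\mA+\mLambda)$ fully rigorous — in particular verifying that the off-diagonal minors ($j \neq k$) also expand multilinearly so that the coefficient of $\prod_i \lambda_i^{1-\sigma_i}$ is precisely the correspondingly-placed entry of $\adj(\mA_\sigma)$, with the sign bookkeeping from row/column deletion matching the padding convention. An alternative, cleaner route that sidesteps per-entry cofactor arguments is an analytic/continuity one: for $\mLambda \succ 0$ the matrix $\mA+\mLambda$ is invertible, and one can differentiate the identity $\det(\mA+\mLambda)\cdot(\mA+\mLambda)^{-1} = \adj(\mA+\mLambda)$, viewing everything as a polynomial identity in the entries of $\mLambda$ restricted to the diagonal; since \Cref{lem:det_A_Lambda} already gives the determinant as a multilinear polynomial in $(\lambda_1,\dots,\lambda_m)$ with degree at most one in each variable, and $\adj$ is a polynomial matrix, matching coefficients of the monomials $\prod_i \lambda_i^{1-\sigma_i}$ on both sides yields the result. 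Either way, once the numerator's multilinear expansion is pinned down, the rest is the routine algebra of the $\det \neq 0$ versus $\det = 0$ case split already sketched above.
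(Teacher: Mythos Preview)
Your proposal is correct and follows essentially the same route as the paper: write $(\mA+\mLambda)^{-1} = \adj(\mA+\mLambda)/\det(\mA+\mLambda)$, expand both numerator and denominator as multilinear sums over $\sigma \in \{0,1\}^m$, and then split the resulting sum according to whether $\det(\mA_\sigma) \neq 0$. For the step you flag as the main obstacle, the paper avoids both entrywise cofactor bookkeeping and your polynomial-matching alternative by reading off from \Cref{lem:adj_A} the one-step recursion on the adjugate and iterating it, which yields $\adj(\mA+\mLambda) = \sum_{\sigma}\bigl(\prod_i\lambda_i^{1-\sigma_i}\bigr)\adj(\mA_\sigma)$ directly.
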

\begin{proof}From \Cref{lem:adj_A} by observation:
\begin{align*}
\adj \left(\mA + \sum_{i=1}^m \lambda_i \ve_i \ve_i^\top \right) 
= \; &\adj(\mA + \sum_{i=2}^m\lambda_i \ve_i\ve_i^\top) + \lambda_1\begin{bmatrix}0 & 0 \\ 0 & \adj\left(D + \sum_{i=1}^{m-1}\lambda_{i+1} \ve_i\ve_i^\top\right)\end{bmatrix}.
\end{align*}
By repeated application of this fact we have
\[\adj (\mA + \mLambda) = \sum_{\sigma \in \{0,1\}^n} \left(\prod_{i=1}^m \lambda_i^{1-\sigma_i} \right)\adj(\mA_\sigma).\]
The result then follows by application of \Cref{lem:det_A_Lambda} to note that $h = \det(A + \Lambda)$ and casing by invertibility of $\mA_\sigma$.
\end{proof}

\subsection{Results from Convex Analysis}
\begin{fact}[\cite{nesterov1994interior}]\label{thm:inner_prod_ub_nu}
    Let $\Phi$ be a $\nu$-self-concordant barrier. Then for any $x\in \textrm{dom}(\Phi)$ and $y\in \textrm{cl(dom)}(\Phi)$,  \[\nabla \Phi(x)^\top (y-x) \leq \nu.\]  
\end{fact}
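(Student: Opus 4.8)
The plan is to reduce the statement to a one-dimensional differential inequality along the segment from $x$ to $y$. Define $g(t) := \Phi(x + t(y-x))$ for $t \in [0,1)$. Since $\mathrm{dom}(\Phi)$ is open and convex, $x$ is interior to it, and $y$ lies in its closure, the half-open segment $\{x + t(y-x) : t \in [0,1)\}$ stays inside $\mathrm{dom}(\Phi)$, so $g$ is well-defined and smooth on $[0,1)$, with $g'(0) = \nabla\Phi(x)^\top(y-x)$ — exactly the quantity to be bounded. Convexity of $\Phi$ gives $g'' \geq 0$, so $g'$ is nondecreasing.

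If $g'(0) \leq 0$ the claim is immediate because $\nu > 0$. Otherwise $g'(t) \geq g'(0) > 0$ throughout $[0,1)$, and the key step is the estimate $g'(t)^2 \leq \nu\, g''(t)$. Writing $h = y-x$ and $H_t = \nabla^2\Phi(x+th)$, which is positive definite since $\Phi$ is a non-degenerate barrier, Cauchy--Schwarz in the local inner product $\langle{}\cdot{},{}\cdot{}\rangle_{H_t}$ gives $g'(t) = \nabla\Phi(x+th)^\top h \leq \sqrt{\nabla\Phi(x+th)^\top H_t^{-1}\nabla\Phi(x+th)}\cdot\sqrt{h^\top H_t h} \leq \sqrt{\nu}\,\sqrt{g''(t)}$, where the final inequality is precisely the barrier condition $\nabla\Phi^\top(\nabla^2\Phi)^{-1}\nabla\Phi \leq \nu$ from \Cref{def:sc_and_scb}.

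Rearranging this yields $\frac{d}{dt}\!\bigl(-1/g'(t)\bigr) = g''(t)/g'(t)^2 \geq 1/\nu$ on $[0,1)$; integrating from $0$ to $t$ gives $\tfrac{1}{g'(0)} - \tfrac{1}{g'(t)} \geq \tfrac{t}{\nu}$, hence $\tfrac{1}{g'(0)} \geq \tfrac{t}{\nu}$ for every $t < 1$ since $1/g'(t) > 0$. Letting $t \to 1^-$ gives $\tfrac{1}{g'(0)} \geq \tfrac{1}{\nu}$, i.e. $\nabla\Phi(x)^\top(y-x) = g'(0) \leq \nu$, which is the claim. I expect the only real subtleties to be the reduction step (verifying that $[x,y)$ remains in the open domain, using openness/convexity of $\mathrm{dom}(\Phi)$) and the bookkeeping of the degenerate case $g'(0)\le 0$; the self-concordance inequality itself is not needed here — only convexity, Hessian non-degeneracy, and the defining barrier inequality — and the remaining computation is a routine one-variable integration.
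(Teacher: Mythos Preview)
The paper does not prove this statement; it is stated as a \texttt{Fact} and attributed to Nesterov's book, with no argument given. Your proof is correct and is essentially the classical argument from that reference: restrict to the line through $x$ and $y$, use convexity plus the barrier inequality $\nabla\Phi^\top(\nabla^2\Phi)^{-1}\nabla\Phi\le\nu$ and Cauchy--Schwarz in the local metric to get $g'(t)^2\le\nu\,g''(t)$, then integrate the resulting Riccati-type inequality. The two minor points you flag are handled properly: the half-open segment $[x,y)$ lies in the open convex domain because $x$ is interior and $y$ is in the closure, and the case $g'(0)\le 0$ is trivial since $\nu\ge 0$. Your remark that the self-concordance (third-derivative) inequality is not used is also accurate---only convexity, non-degeneracy of the Hessian (implicit in \Cref{def:sc_and_scb} via the appearance of $(\nabla^2 f)^{-1}$), and the $\nu$-barrier bound enter.
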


\begin{theorem}\label{thm:res_lower_bound}
Let $K=\left\{ \vx:{{A}}\vx\geq{b}\right\} $ be
a polytope such that each of $m$ rows of ${A}$ is normalized to be
unit norm. Let $K$ contain a ball of radius $r$ and be contained
inside a ball of radius $R$ centered at the origin. Let 
\begin{equation}
\ue:=\arg\min_{\u}q(\u)+\eta\phi_{K}(\u),\label{eq:def-ueta}
\end{equation}
where $q$ is a convex $L$-Lipschitz function and $\phi_K$ is a $\nu$-self-concordant barrier on $K$. We show for $\re_i(u_\eta)$, the $i^\mathrm{th}$ residual at $u_\eta$, the following lower bound:
 \[\re_i(u_\eta)\geq  \min\left\{ \frac{\eta}{2}, \frac{r\eta^2}{150 (\nu\eta^2 + R^2 (L^2 +1) )} \right\}.\] 
\end{theorem}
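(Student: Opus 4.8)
The plan is to go from first-order optimality through the self-concordance calculus, reducing the statement to an upper bound on the barrier Hessian at $\ue$. Since $\ue$ minimizes the smooth convex function $q(\u)+\eta\phi_{K}(\u)$ over $\mathrm{int}(K)$, stationarity gives $\nabla q(\ue)=-\eta\,\nabla\phi_{K}(\ue)$, and $L$-Lipschitzness of $q$ then yields the Euclidean bound $\|\nabla\phi_{K}(\ue)\|\leq L/\eta$; one may also note that replacing $q$ by the linear map $\u\mapsto\langle\nabla q(\ue),\u\rangle$ (still convex and $L$-Lipschitz) leaves $\ue$ unchanged by strict convexity of $\phi_{K}$, so $\ue$ is in effect a central-path point of $\phi_{K}$ for a linear objective of norm at most $L$. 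I would then split on $\re_{i}(\ue)$: if $\re_{i}(\ue)\geq\eta/2$ there is nothing to prove, so assume $\re_{i}(\ue)<\eta/2$.

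The next step converts the residual into a statement about the Hessian. As the rows of $A$ are unit-norm, $\re_{i}(\ue)$ is exactly the Euclidean distance from $\ue$ to the $i$-th supporting hyperplane, and a one-line projection argument shows this is at least $\mathrm{dist}(\ue,\partial K)$. Since the Dikin ellipsoid $\{\u:(\u-\ue)^{\top}\nabla^{2}\phi_{K}(\ue)(\u-\ue)\leq 1\}$ is contained in $K$, it contains the Euclidean ball of radius $\lambda_{\max}(\nabla^{2}\phi_{K}(\ue))^{-1/2}$ centered at $\ue$, so $\re_{i}(\ue)\geq\mathrm{dist}(\ue,\partial K)\geq\lambda_{\max}(\nabla^{2}\phi_{K}(\ue))^{-1/2}$. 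It therefore suffices to show $\lambda_{\max}(\nabla^{2}\phi_{K}(\ue))\leq\big(150(\nu\eta^{2}+R^{2}(L^{2}+1))/(r\eta^{2})\big)^{2}$.

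Bounding this largest eigenvalue is the crux, and the step I expect to be the main obstacle: for a general self-concordant barrier $\nabla^{2}\phi_{K}$ can blow up arbitrarily near $\partial K$, so it cannot be controlled from the geometry of $K$ alone — one must genuinely use that $\ue$ is the \emph{barrier-regularized} minimizer. The ingredients I would assemble are: (i) $K\subseteq B(0,R)$ forces $\nabla^{2}\phi_{K}(\cdot)\succeq(2R)^{-2}I$ everywhere, which lets one pass between the Euclidean norm and the local norm $\|h\|_{\ue}:=(h^{\top}\nabla^{2}\phi_{K}(\ue)h)^{1/2}$ and in particular gives $\|\nabla\phi_{K}(\ue)\|_{\ue}^{*}\leq 2RL/\eta$; (ii) the $\nu$-self-concordant-barrier inequality $\|\nabla\phi_{K}(\ue)\|_{\ue}^{*}\leq\sqrt{\nu}$ from \Cref{def:sc_and_scb}, together with \Cref{thm:inner_prod_ub_nu}; (iii) the inscribed $r$-ball, which (via the classical fact that $K$ lies within $\|\cdot\|_{x^{\star}}$-distance $\nu+2\sqrt{\nu}$ of the analytic center $x^{\star}=\argmin\phi_{K}$) bounds $\nabla^{2}\phi_{K}(x^{\star})\preceq(\nu+2\sqrt{\nu})^{2}r^{-2}I$, and, through optimality of $\ue$ against $x^{\star}$ (or the $r$-ball center), bounds $\phi_{K}(\ue)$ and hence how far $\ue$ can sit from $x^{\star}$. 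A natural route is then to chain these — using the gradient bounds to control the displacement of $\ue$ from $x^{\star}$ and transporting the Hessian control from $x^{\star}$ to $\ue$ — to obtain the required bound, with the constant $150$ absorbing the accumulated slack and the case split at $\eta/2$ exactly what makes the competing scales $L/\eta$ (from (i)) and $\eta$ (the scale of the residual) comparable. The remaining work — carrying the constants faithfully through this chain and handling the regime where $\ue$ is far from $x^{\star}$ in the local norm — is the tedious part.
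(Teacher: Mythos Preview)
Your route diverges from the paper's, and the divergence is exactly at the step you flag as the main obstacle. You reduce the residual bound to an upper bound on $\lambda_{\max}(\nabla^{2}\phi_{K}(\ue))$ via the Dikin ellipsoid, and then propose to control this eigenvalue by transporting the Hessian from the analytic center $x^{\star}$. The difficulty is real and, as stated, not overcome: the transport inequality $\nabla^{2}\phi_{K}(y)\preceq(1-\|y-x^{\star}\|_{x^{\star}})^{-2}\nabla^{2}\phi_{K}(x^{\star})$ is valid only when $\|y-x^{\star}\|_{x^{\star}}<1$, whereas from optimality you get at best $\phi_{K}(\ue)-\phi_{K}(x^{\star})\lesssim LR/\eta$, which for small $\eta$ places $\ue$ arbitrarily far from $x^{\star}$ in the local norm. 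Integrating the Hessian along a path then incurs exponential blow-up, and none of your ingredients (i)--(iii) bypass this: the gradient bounds in (i) and (ii) control only $\nabla\phi_{K}$, and the $\nu$-barrier inequality yields a \emph{lower} bound on $\nabla^{2}\phi_{K}$ in the gradient direction, not the upper bound you need. So the proposal, as written, does not close.

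The paper sidesteps the Hessian entirely. After linearizing $q$ at $\ue$ (your first paragraph), so that $\ue=\arg\min_{\u}c^{\top}\u+\eta\phi_{K}(\u)$ with $c=\nabla q(\ue)$ and $\|c\|\le L$, it absorbs every part of the linear cost \emph{except} $a_{i}$ into a modified barrier $\widetilde{\phi}_{K}(\u)=\eta^{-1}(c-a_{i})^{\top}\u+\phi_{K}(\u)$, which by \Cref{lem:linear_plus_barrier_sc} is again a self-concordant barrier on $K$ with parameter $\widetilde{\nu}\le 20\big(\nu+R^{2}\eta^{-2}(\|c\|^{2}+1)\big)$. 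This recasts $\ue$ as $\arg\min_{\u}a_{i}^{\top}\u+\eta\widetilde{\phi}_{K}(\u)$, i.e.\ a central-path point for the LP $\min_{\u\in K}a_{i}^{\top}\u$, whose optimality gap is precisely $\re_{i}(\ue)-\re_{i}(\us)\le\re_{i}(\ue)$. The paper then invokes the ready-made lower bound on this gap from \Cref{lem:zong_opt_gap}, obtaining $\min\{\eta/2,\,r/(2\widetilde{\nu}+4\sqrt{\widetilde{\nu}})\}\le\re_{i}(\ue)$, and the theorem follows after substituting $\widetilde{\nu}$. The idea you are missing, then, is to view $\re_{i}(\ue)$ not as a distance-to-boundary to be controlled through the Hessian, but as an LP optimality gap that can be lower-bounded directly once the residual linear terms are folded into the barrier.
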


\looseness=-1To prove \Cref{thm:res_lower_bound}, we need the following technical result by \cite{zong2023short}, bounding the optimality gap of a convex program with a linear cost and  a barrier  enforcing its constraints. 

\begin{fact}[\cite{zong2023short}]\label{lem:zong_opt_gap}
    Fix a vector $\vc$, a polytope $K$, and a point $\v.$ We assume
that the polytope $K$ contains a ball of radius $r.$ Let $\vstar=\arg\min_{\u\in K}\vc^{\top}\u$.
We define, for $\vc,$  
\begin{equation}
\gap(\v)=\vc^{\top}(\v-\vstar).\label{eq:1}
\end{equation}
Further, define $\v_{\eta}=\arg\min_{\v}\vc^{\top}\v+\eta\phi_{K}(\v)$,
where $\phi_{K}$ is a self-concordant barrier on $K.$ Then we have the following
lower bound on this suboptimality gap evaluated at $\veta$: 
\begin{equation}
\min\left\{ \frac{\eta}{2},\frac{r\|\vc\|}{2\nu+4\sqrt{\nu}}\right\} \leq\gap(\veta)=\vc^{\top}(\veta-\vstar).\label[ineq]{eq:2}
\end{equation}
\end{fact}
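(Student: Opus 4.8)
The plan is a standard interior-point argument: convert the defining property of $\veta$ into a statement about the barrier gradient, extract a lower bound on $\gap(\veta)$ from the containment of the Dikin ellipsoid of $\phi_K$ inside $K$, and then separately handle the regime in which this bound is weak by exploiting that $\veta$ must there lie near the analytic center of $K$, whose conditioning is pinned down by the inscribed ball of radius $r$.

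First, since $\veta$ minimizes the strictly convex function $\vc^\top \v + \eta\phi_K(\v)$ over $\mathrm{int}(K)$, first-order optimality gives $\vc = -\eta\nabla\phi_K(\veta)$; pairing this with \Cref{thm:inner_prod_ub_nu} at $y=\vstar$ recovers the classical \emph{upper} bound $\gap(\veta)\le\nu\eta$, and we are after the complementary lower bound. Because $\phi_K$ is self-concordant, its unit Dikin ellipsoid at $\veta$, namely $\mathcal{D}:=\{\v:(\v-\veta)^\top\nabla^2\phi_K(\veta)(\v-\veta)\le 1\}$, is contained in $K$. Minimizing the linear functional $\v\mapsto\vc^\top\v$ over $\mathcal{D}$ in closed form and using $\vstar\in K\supseteq\mathcal{D}$ yields
\[
\gap(\veta)=\vc^\top(\veta-\vstar)\ \ge\ \|\vc\|_{(\nabla^2\phi_K(\veta))^{-1}}\ =\ \eta\,\lambda(\veta),
\]
where $\lambda(\veta):=\|\nabla\phi_K(\veta)\|_{(\nabla^2\phi_K(\veta))^{-1}}$ is the local dual norm of the barrier gradient and the last equality uses $\vc=-\eta\nabla\phi_K(\veta)$. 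If $\lambda(\veta)$ is bounded below by an absolute constant, this already delivers the $\tfrac{\eta}{2}$ branch.

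It remains to treat $\lambda(\veta)$ small, where the Dikin bound alone is insufficient. Here I would use the standard self-concordance estimate that a small Newton decrement places $\veta$ within local-norm distance $\tfrac{\lambda(\veta)}{1-\lambda(\veta)}$ of the analytic center $a:=\argmin_{\v\in K}\phi_K(\v)$, so that $\nabla^2\phi_K(a)$ and $\nabla^2\phi_K(\veta)$ are comparable up to a constant factor. Then invoke the two classical facts for a $\nu$-self-concordant barrier: the unit Dikin ellipsoid at $a$ is inscribed in $K$, and $K\subseteq\{\v:\|\v-a\|_{\nabla^2\phi_K(a)}\le\nu+2\sqrt\nu\}$. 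Combining the second with the hypothesis that $K$ contains a ball $B(z,r)$ forces this radius-$r$ ball to fit inside that ellipsoid in every direction, hence $\nabla^2\phi_K(a)\preceq\tfrac{(\nu+2\sqrt\nu)^2}{r^2}I$ and so $\|\vc\|_{(\nabla^2\phi_K(a))^{-1}}\ge\tfrac{r}{\nu+2\sqrt\nu}\|\vc\|$. Feeding this back through the Dikin lower bound on $\gap(\veta)$ and the Hessian comparison gives the $\tfrac{r\|\vc\|}{2\nu+4\sqrt\nu}$ branch, and taking the minimum of the two cases proves the claim.

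The main obstacle is the small-decrement case: one must quantify precisely how close $\veta$ is to $a$ as a function of $\lambda(\veta)$ and then choose the threshold separating the two cases so that the resulting constants align exactly with $\tfrac{\eta}{2}$ and $2\nu+4\sqrt\nu$; this bookkeeping, together with correctly citing the outer ellipsoidal bound for self-concordant barriers, is where the care lies, whereas the first branch and the Dikin-ellipsoid step are routine.
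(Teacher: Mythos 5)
First, a point of orientation: the paper does \emph{not} prove \Cref{lem:zong_opt_gap}; it is stated as an imported fact from \cite{zong2023short}, so there is no in-paper argument to compare yours against. With that said, your first step is sound: first-order optimality gives $\vc=-\eta\nabla\phi_K(\veta)$, the Dikin ellipsoid at $\veta$ sits inside $K$, and minimizing $\vc^\top\v$ over that ellipsoid (whose minimum value is still at least $\vc^\top\vstar$) yields $\gap(\veta)\geq\|\vc\|_{(\nabla^2\phi_K(\veta))^{-1}}=\eta\,\lambda(\veta)$, which handles the first branch whenever $\lambda(\veta)$ is at least the chosen threshold.

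The small-$\lambda$ branch, however, has a constant-matching problem that is not merely cosmetic bookkeeping. The Hessian comparison you plan to use, namely $(1-\rho)^2\nabla^2\phi_K(\veta)\preceq\nabla^2\phi_K(a)\preceq(1-\rho)^{-2}\nabla^2\phi_K(\veta)$ with $\rho:=\|\veta-a\|_{\nabla^2\phi_K(\veta)}\leq\lambda/(1-\lambda)$, gives $\gap(\veta)\geq(1-\rho)\cdot r\|\vc\|/(\nu+2\sqrt{\nu})$ after combining with the outer-ellipsoid bound. To land on the stated $r\|\vc\|/(2\nu+4\sqrt{\nu})$ you need $\rho\leq 1/2$, hence $\lambda\leq 1/3$, so the threshold between cases must be at most $1/3$; but then the first branch only produces $\gap(\veta)\geq\eta/3$ rather than $\eta/2$. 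With the standard estimates you cite, no threshold achieves both stated constants simultaneously, so the argument as sketched proves a strictly weaker bound such as $\min\{\eta/3,\,r\|\vc\|/(2\nu+4\sqrt{\nu})\}$. Closing this gap requires either a sharper distance-to-analytic-center bound than $\lambda/(1-\lambda)$, or an argument that bypasses the analytic center entirely, for instance by applying the fact that for a $\nu$-self-concordant barrier any $y\in K$ with $\nabla\phi_K(\veta)^\top(y-\veta)\geq 0$ satisfies $\|y-\veta\|_{\nabla^2\phi_K(\veta)}\leq\nu+2\sqrt{\nu}$, applied to suitably chosen points of the inscribed ball lying in the half-space $\{y:\vc^\top y\leq\vc^\top\veta\}$, together with the elementary lower bound $\gap(\veta)\geq r\|\vc\|-\vc^\top(z-\veta)$ from the ball's $-\vc$-pole. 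You correctly flag the bookkeeping as the hard step, but as written the two branches are genuinely incompatible at the stated constants, not just unoptimized.
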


We also need the following technical results from \cite{ghadiri2024improving}. 

\begin{fact}[\cite{ghadiri2024improving}]
\label{lem:f_sc_one_by_nine_r_squared}If $f$ is a self-concordant barrier for a set $K\subseteq B(0,R),$
then $\nabla^{2}f(x)\succeq\frac{1}{9R^{2}}I$ for any $x\in K$. 
\end{fact}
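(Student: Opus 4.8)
The plan is to invoke the fundamental containment property of self-concordant functions: the open unit Dikin ellipsoid centered at any domain point lies inside the domain. Writing $\|v\|_x := (v^\top \nabla^2 f(x) v)^{1/2}$ for the local norm at $x \in \mathrm{dom}(f)$, this says $\{y : \|y - x\|_x < 1\} \subseteq \mathrm{dom}(f) \subseteq K \subseteq B(0,R)$, where the middle inclusion is because $f$ is a barrier for $K$ and the last is the hypothesis. This is a standard consequence of the self-concordance inequality in \Cref{def:sc_and_scb} (see, e.g., Nesterov--Nemirovski or Renegar): along the ray $t \mapsto x + th$ the map $t \mapsto (\mathcal{D}^2 f(x+th)[h,h])^{-1/2}$ is $1$-Lipschitz, so the Hessian cannot blow up — hence the boundary of the barrier domain cannot be reached — for $t$ below $1/\|h\|_x$.

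Granting this, fix any $x \in \mathrm{dom}(f)$ and any unit vector $h \in \R^{d_x}$. For every $t < 1/\|h\|_x$ the point $x + th$ lies in $B(0,R)$; since $x$ itself lies in $B(0,R)$, the triangle inequality gives $t = \|(x+th) - x\| \leq 2R$. Letting $t \uparrow 1/\|h\|_x$ yields $1/\|h\|_x \leq 2R$, i.e. $h^\top \nabla^2 f(x) h = \|h\|_x^2 \geq \tfrac{1}{4R^2}$. As $h$ ranges over all unit vectors, $\nabla^2 f(x) \succeq \tfrac{1}{4R^2} I \succeq \tfrac{1}{9R^2} I$, which is the claim — the $\tfrac19$ in the statement being merely a conservative choice of constant.

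I expect the only real subtlety to be stating and invoking the Dikin-ellipsoid-in-domain property cleanly, since \Cref{def:sc_and_scb} records the differential inequality but not this geometric corollary. For a self-contained argument the short route is a continuation argument: if $x + h \in \partial K$ for some $h$ with $\|h\|_x < 1$, set $\psi(t) := (\mathcal{D}^2 f(x+th)[h,h])^{-1/2}$; the self-concordance inequality gives $|\psi'(t)| \leq 1$, so $\psi(t) \geq \psi(0) - t = 1/\|h\|_x - t > 0$ for all $t \in [0,1]$, meaning $\mathcal{D}^2 f$ stays finite along the whole segment, contradicting the blow-up of a barrier (and its Hessian along suitable directions) at the boundary. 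Everything else — the triangle inequality, and the passage from directional lower bounds to $\nabla^2 f(x) \succeq \tfrac{1}{9R^2} I$ — is routine. Two remarks for rigor: ``for any $x \in K$'' must be read as $x \in \mathrm{int}(K) = \mathrm{dom}(f)$, where $\nabla^2 f$ is defined; and if $K$ is not full-dimensional the identical argument runs inside its affine hull.
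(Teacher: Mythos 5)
The paper imports \Cref{lem:f_sc_one_by_nine_r_squared} verbatim from \cite{ghadiri2024improving} and supplies no proof, so there is no in-paper argument to compare against; your task here is simply to verify the fact, and your argument does so correctly. The Dikin-ellipsoid containment $\{y:\|y-x\|_x<1\}\subseteq\mathrm{int}(K)$ together with $\mathrm{diam}(K)\leq 2R$ gives, for any unit $h$, $1/\|h\|_x\leq 2R$ and hence $\nabla^2 f(x)\succeq\tfrac{1}{4R^2}I$ — in fact slightly stronger than the stated $\tfrac{1}{9R^2}I$, as you observe. (The looser constant $9$ in \cite{ghadiri2024improving} presumably comes from a cruder diameter or recentering estimate; either way the stated bound follows a fortiori.)

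Two small points. First, in the continuation argument the thing that produces the contradiction is not that ``the Hessian along suitable directions'' blows up, but that the bound $\psi(t)\geq 1/\|h\|_x-t>0$ gives $g''(t)=\mathcal{D}^2 f(x+th)[h,h]\leq(1/\|h\|_x-t)^{-2}$, and integrating twice keeps $g(t)=f(x+th)$ \emph{finite} on $[0,1]$, which contradicts $\lim_{x\to\partial K}f(x)=+\infty$ if $x+h\in\partial K$. You do not need (and the definition in \Cref{def:sc_and_scb} does not directly give you) blow-up of the Hessian at the boundary, only blow-up of the value. Second, your caveat about reading $x\in K$ as $x\in\mathrm{int}(K)$ and about working inside $\mathrm{aff}(K)$ is correct but moot in this paper, since \Cref{sec:explicitMPC} assumes the constraint polytope contains a full-dimensional ball of radius $r$, so $\nabla^2 f$ is genuinely positive definite on the interior and the stated spectral inequality makes sense as written.
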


\begin{fact}[\cite{ghadiri2024improving}]
\label{lem:linear_plus_barrier_sc}If $f$ is a $\nu$-self-concordant barrier for a given
convex set $K$ then $g(x)=c^{\top}x+f(x)$ is a self-concordant barrier over $K$. Further, if $K\subseteq B(0,R),$ then $g$ has self-concordance parameter at most 
$20(\nu+R^{2}\|c\|^{2})$. 
\end{fact}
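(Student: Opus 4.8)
The plan is to exploit that $g$ differs from $f$ only by the affine term $c^\top x$, so the second and third derivatives of $g$ coincide with those of $f$: for every $x\in K$ and every direction $h$ we have $\mathcal{D}^2 g(x)[h,h]=\mathcal{D}^2 f(x)[h,h]$ and $\mathcal{D}^3 g(x)[h,h,h]=\mathcal{D}^3 f(x)[h,h,h]$. Hence the self-concordance inequality of \Cref{def:sc_and_scb} for $g$ is \emph{verbatim} the one $f$ satisfies, so $g$ is self-concordant. For the barrier (blow-up) property: any sequence $x_k\in K$ converging to a point of $\partial K$ is bounded, so $c^\top x_k$ stays bounded while $f(x_k)\to+\infty$, whence $g(x_k)\to+\infty$. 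This already yields the first sentence of the Fact — note it only asserts $g$ is a self-concordant barrier in the sense of \Cref{def:sc_and_scb}; its \emph{finite} complexity parameter is the content of the second sentence, which is where the assumption $K\subseteq B(0,R)$ will be needed.

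Next I would bound the complexity parameter of $g$. Since $\nabla g(x)=c+\nabla f(x)$ and $\nabla^2 g(x)=\nabla^2 f(x)=:H(x)$, writing $\norm{v}_x^2 := v^\top H(x)^{-1}v$ for the local dual norm, the parameter I must control is $\sup_{x\in K}\norm{c+\nabla f(x)}_x^2$. By the triangle inequality in $\norm{\cdot}_x$ together with $(a+b)^2\le 2a^2+2b^2$, and then the defining bound $\norm{\nabla f(x)}_x^2 = \nabla f(x)^\top H(x)^{-1}\nabla f(x)\le \nu$ of a $\nu$-self-concordant barrier, I obtain $\norm{c+\nabla f(x)}_x^2 \le 2\norm{c}_x^2 + 2\nu$. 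It remains to bound $\norm{c}_x^2 = c^\top H(x)^{-1}c$ uniformly in $x$; this is exactly where $K\subseteq B(0,R)$ enters. By \Cref{lem:f_sc_one_by_nine_r_squared}, $H(x)=\nabla^2 f(x)\succeq \frac{1}{9R^2}I$ for every $x\in K$, so $H(x)^{-1}\preceq 9R^2 I$ and $\norm{c}_x^2\le 9R^2\norm{c}^2$. Combining, $\sup_{x\in K}\norm{\nabla g(x)}_x^2 \le 18R^2\norm{c}^2 + 2\nu \le 20(\nu + R^2\norm{c}^2)$, which is the claimed parameter.

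I do not expect a genuine obstacle here: the whole argument is a few lines of manipulation of the self-concordance definition. The only non-elementary input is the Hessian lower bound $\nabla^2 f(x)\succeq \frac{1}{9R^2}I$ (our \Cref{lem:f_sc_one_by_nine_r_squared}), which is precisely what converts the geometric assumption $K\subseteq B(0,R)$ into the analytic statement that $c$ has uniformly small local dual norm. The one point that requires a little care is the split $\norm{c+\nabla f(x)}_x^2 \le 2\norm{c}_x^2 + 2\nu$ — in particular, that the constant $2$ is absolute rather than $x$-dependent, which holds because both terms on the right are measured in the \emph{same} local norm $\norm{\cdot}_x$, so only the elementary inequality $(a+b)^2\le 2a^2+2b^2$ for scalars is used.
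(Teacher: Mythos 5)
The paper does not prove this statement: it is imported verbatim as a cited Fact from \cite{ghadiri2024improving}, so there is no in-paper proof to compare against. Your argument is correct and, in all likelihood, is the intended one in the source. Adding an affine term leaves $\nabla^2$ and $\mathcal{D}^3$ untouched, so self-concordance of $g$ is inherited for free; the blow-up at $\partial K$ follows since $c^\top x$ is bounded along any convergent sequence; and the parameter bound reduces to controlling $\norm{c+\nabla f(x)}_x^2$ in the local norm $\norm{v}_x^2 = v^\top(\nabla^2 f(x))^{-1}v$. The triangle inequality plus $(a+b)^2 \le 2a^2+2b^2$ give $\norm{c+\nabla f(x)}_x^2 \le 2\norm{c}_x^2 + 2\nu$, and \Cref{lem:f_sc_one_by_nine_r_squared} (which the paper also imports from the same reference) converts the geometric hypothesis $K\subseteq B(0,R)$ into $\norm{c}_x^2 \le 9R^2\norm{c}^2$; the final numeric check $2\nu + 18R^2\norm{c}^2 \le 20(\nu + R^2\norm{c}^2)$ holds. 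One point worth flagging: the first sentence of the Fact asserts that $g$ is a ``self-concordant barrier'' with no boundedness hypothesis on $K$, and if that phrase is read as requiring a \emph{finite} complexity parameter (as some authors do), it can fail when $K$ is unbounded, since $\norm{c}_x$ need not be uniformly bounded; your reading (self-concordance plus blow-up, with finiteness of the parameter deferred to the second sentence) matches the paper's \Cref{def:sc_and_scb} and is the sensible one here, but it is good that you made the distinction explicit.
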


We now prove \Cref{thm:res_lower_bound}. 
\begin{proof}[Proof of \Cref{thm:res_lower_bound}]
Applying the first-order optimality condition of $\ue$ in \Cref{eq:def-ueta} gives us that 
\begin{equation}
\eta\nabla\phi_{K}(\ue)+\nabla q(\ue)=0.\label{eq:a}
\end{equation}
From here on, we fix $\vc=\nabla q(\ue),$where $\ue$ is as in
\Cref{eq:def-ueta}. Then, 
 we may conclude 
\begin{equation}
\ue\in\arg\min_{\u}\vc^{\top}\u+\eta\phi_{K}(\u),\label{eq:4}
\end{equation}
where we have
replaced the cost $q$ in   \Cref{eq:def-ueta} with a specific linear cost $\vc$; to see \Cref{eq:4},
observe that $\ue$ satisfies the first-order optimality condition
of  \Cref{eq:4} because of  \Cref{eq:a} and our choice of $\vc$. 

We now define the function $\phitilde_K(x)= \eta^{-1}\cdot(c - a_i)^\top x + \phi_K(x)$. By \Cref{lem:linear_plus_barrier_sc}, we have that $\phitilde_K$ is a self-concordant-barrier on $K$ with self-concordance parameter 
\[ \widetilde{\nu}\leq 20 (\nu+ R^2\eta^{-2}(\|c\|^2+\|a_i\|^2).\numberthis\label{eq:new-scb-parameter} \] With this new self-concordant barrier in hand, we may now express $\vueta$ from \Cref{eq:4} as the following optimizer: \[ \vueta = \arg\min_u a_i^\top u + \eta \phitilde_K(u).\numberthis\label{eq:new-expression-for-ueta}\] 

Further,
let $\us\in\arg\min_{\u\in K}a_i^\top\u$. 
By applying 
\Cref{lem:zong_opt_gap} to $\vueta$ expressed as in  \Cref{eq:new-expression-for-ueta}, we have 
\begin{equation}
\min\left\{ \frac{\eta}{2},\frac{r\|a_i\|}{2\widetilde{\nu}+4\sqrt{\widetilde{\nu}}}\right\} \leq a_i^{\top}(\ue-\us).\label[ineq]{eq:dist_bound_zeroth}
\end{equation} 

The lower bound in \Cref{eq:dist_bound_zeroth} may be expanded upon via \Cref{eq:new-scb-parameter}, and chaining this with the observation  $a_i^\top (u_\eta- \us) = \textrm{res}_i (\vueta) - \textrm{res}_i(\us)$ gives: \[ \min\left\{ \frac{\eta}{2}, \frac{r\|a_i\|}{150 (\nu + R^2 \eta^{-2} (\|c\|^2 + \|a_i\|^2) )} \right\} \leq \textrm{res}_i(\vueta)-\textrm{res}_i(\us).\] The definition of $\us$ 
implies $\textrm{res}_i(\us) \geq 0$, hence  $\textrm{res}_i(u_\eta)\geq  \min\left\{ \frac{\eta}{2}, \frac{r}{150 (\nu + R^2 \eta^{-2} (L^2 + 1) )} \right\}.$ Repeating this computation for each constraint of $K$ gives the claimed bound overall. 
\end{proof}

\end{appendix}

\end{document}